\newcommand{\Acom}{{ByteDance}}
\newtheorem{definition}{Definition}
\newtheorem{theorem}{Theorem}
\newtheorem{problem}{Problem}
\newtheorem{assumption}{Assumption}
\newtheorem{example}{Example}
\newtheorem{observation}{Observation}
\newtheorem{lemma}{Lemma}
\newcolumntype{M}[1]{>{\centering\arraybackslash}m{#1}}
\newcolumntype{L}[1]{>{\RaggedRight\arraybackslash}m{#1}}
\definecolor{c1}{RGB}{49,87,149}
\definecolor{c2}{RGB}{221,49,24}
\definecolor{c3}{RGB}{107,186,85}
\definecolor{c4}{RGB}{60,151,178}
\definecolor{c5}{RGB}{243,178,147}
\definecolor{c6}{RGB}{142,96,160}
\definecolor{c7}{RGB}{91, 8, 136}
\definecolor{c8}{RGB}{232,34,45}
\definecolor{c9}{RGB}{0,24,113}
\definecolor{c10}{RGB}{255,88,93}
  \renewcommand{\baselinestretch}{0.975}\normalsize
  \let\ACM@origbaselinestretch\baselinestretch
\pgfplotsset{compat=1.18,
        /pgfplots/ybar legend/.style={
        /pgfplots/legend image code/.code={
        \draw [#1,yshift=-0.4em] rectangle (0.7cm,0.3cm);
        },
    }
}
\def\BibTeX{{\rm B\kern-.05em{\sc i\kern-.025em b}\kern-.08em
    T\kern-.1667em\lower.7ex\hbox{E}\kern-.125emX}}
\begin{document}

\title{Maintaining Leiden Communities in Large Dynamic Graphs
(Full Version)
}

\makeatletter
\newcommand{\linebreakand}{%
  \end{@IEEEauthorhalign}
  \hfill\mbox{}\par
  \mbox{}\hfill\begin{@IEEEauthorhalign}
}
\makeatother

\author{
\IEEEauthorblockN{Chunxu Lin}
\IEEEauthorblockA{
\textit{CUHK-Shenzhen}\\
Shenzhen, China \\
chunxulin1@link.cuhk.edu.cn}
\and
\IEEEauthorblockN{Yumao Xie}
\IEEEauthorblockA{
\textit{CUHK-Shenzhen}\\
Shenzhen, China \\
yumaoxie@link.cuhk.edu.cn}
\and
\IEEEauthorblockN{Yixiang Fang}
\IEEEauthorblockA{
\textit{CUHK-Shenzhen}\\
Shenzhen, China \\
fangyixiang@cuhk.edu.cn}

\linebreakand

\IEEEauthorblockN{Yongmin Hu}
\IEEEauthorblockA{\textit{ByteDance Inc.} \\
Hangzhou, China \\
huyongmin@bytedance.com}
\and
\IEEEauthorblockN{Yingqian Hu}
\IEEEauthorblockA{\textit{ByteDance Inc.} \\
Hangzhou, China \\
huyingqian@bytedance.com}
\and
\IEEEauthorblockN{Cheng Chen}
\IEEEauthorblockA{\textit{ByteDance Inc.} \\
Singapore \\
chencheng.sg@bytedance.com}
}

\maketitle

\begin{abstract}
    As a well-known community detection algorithm, Leiden has been widely used in various scenarios such as large language model (LLM) generation, anomaly detection, and biological analysis. 
    In these scenarios, the graphs are often large and dynamic, where vertices and edges are inserted and deleted frequently, so it is costly to obtain the updated communities by Leiden from scratch when the graph has changed.
    Recently, one work has attempted to study how to maintain Leiden communities in the dynamic graph, but it lacks a detailed theoretical analysis, and its algorithms are inefficient for large graphs.
    To address these issues, in this paper, we first theoretically show that the existing algorithms are relatively unbounded via the boundedness analysis (a powerful tool for analyzing incremental algorithms on dynamic graphs), and also analyze the memberships of vertices in communities when the graph changes.
    Based on theoretical analysis, we develop a novel efficient maintenance algorithm, called \textit{Hierarchical Incremental Tree Leiden} ({\texttt{HIT-Leiden}}), which effectively reduces the range of affected vertices by maintaining the connected components and hierarchical community structures.
    Comprehensive experiments on various datasets demonstrate the superior performance of \texttt{HIT-Leiden}. In particular, it achieves speedups of up to five orders of magnitude over existing methods.
\end{abstract}

\begin{IEEEkeywords}
Incremental graph algorithms, community detection, Leiden algorithm.
\end{IEEEkeywords}

\section{Introduction}
\label{sec:intro}


As one of the fundamental measures in network science, modularity \cite{newman2004finding} effectively measures the strength of division of a network into modules (also called communities).
Essentially, it captures the difference between the actual number of edges within a community and the expected number of such edges if connections were random.
By maximizing the modularity of a graph, it can reveal all the communities in the graph.
In Fig. \ref{fig:mod}\subref{fig:mod:org_graph}, for example, by maximizing the modularity of the graph, we can obtain two communities $C_1$ and $C_2$. 
As shown in the literature \cite{chakraborty2017metrics,su2022comprehensive}, graph communities have found a wide range of applications in recommendation systems, social marketing, and biological analysis.

\begin{figure}[t]
  \centering
  \subfloat[A static graph $G$]{
    \begin{minipage}[t]{0.215625\textwidth}
      \centering
      \includegraphics[width=0.818181818\linewidth]{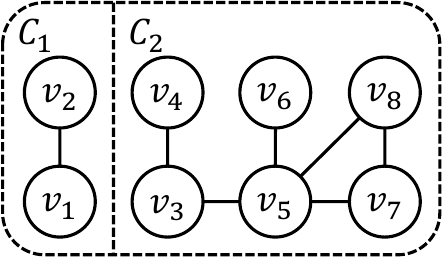}
    \end{minipage}
    \label{fig:mod:org_graph}
  }
  \hfil
  \subfloat[A dynamic graph $G^{'}$]{
    \begin{minipage}[t]{0.215625\textwidth}
      \centering
      \includegraphics[width=0.818181818\linewidth]{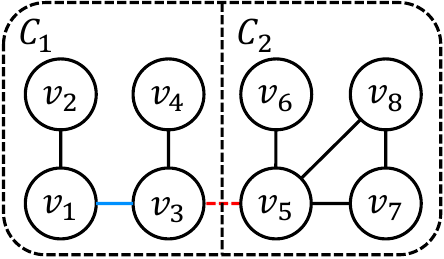}
    \end{minipage}
    \label{fig:mod:cur_partition}
  }
  \setlength{\abovecaptionskip}{-1pt}
  \setlength{\belowcaptionskip}{-2pt}
  \caption{Illustrating community maintenance, where $(v_1, v_3)$ is a newly inserted edge and $(v_3, v_5)$ is a newly deleted edge.}
  \label{fig:mod}
\end{figure}

One of the most popular community detection (CD) algorithms that uses modularity maximization is Louvain \cite{blondel2008fast}, which partitions a graph into disjoint communities.
As shown in Fig. \ref{fig:louvain_leiden}\subref{fig:louvain_leiden:louvain}, Louvain employs an iterative process with each iteration having two phases, called \textbf{\textit{movement}} and \textbf{\textit{aggregation}}, to adjust the community structure and improve modularity.
Specifically, in the movement phase, each vertex is relocated to a suitable community to maximize the modularity of the graph.
In the aggregation phase, all the vertices belonging to the same community are merged into a supervertex to form a supergraph for the next iteration.
Since a supervertex corresponds to a set of vertices, the communities of a graph naturally form a tree-like hierarchical structure. 
In practice, to balance modularity gains against the running time, users often limit Louvain to $P$ iterations, where $P$ is a pre-defined parameter.

\begin{figure}[t]
  \centering
  \subfloat[The process of the Louvain algorithm \cite{blondel2008fast}.]{
    \begin{minipage}[t]{0.45\textwidth}
      \centering
      \includegraphics[width=\linewidth]{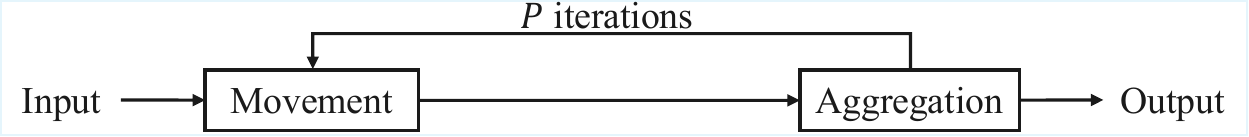}
    \end{minipage}
    \label{fig:louvain_leiden:louvain}
  }
  \hfil
  \subfloat[The process of the Leiden algorithm \cite{traag2019louvain}.]{
    \begin{minipage}[t]{0.45\textwidth}
      \centering
      \includegraphics[width=\linewidth]{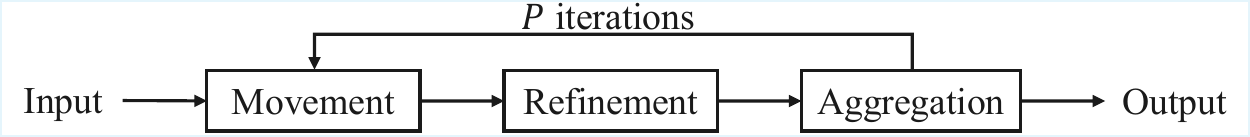}
    \end{minipage}
    \label{fig:louvain_leiden:leiden}
  }
  \setlength{\abovecaptionskip}{0pt}
  \setlength{\belowcaptionskip}{-2pt}
  \caption{Illustrating the Louvain and Leiden algorithms.}
  \label{fig:louvain_leiden}
\end{figure}

Despite its popularity, Louvain may produce communities that are internally disconnected. 
This typically occurs during the movement phase, where a vertex that serves as a bridge within a community may be moved to a different community that has stronger connections, thereby breaking the connectivity of the original community. 
To overcome this issue, Traag et al. \cite{traag2019louvain} proposed the \textit{Leiden} algorithm\footnote{As of June 2026, Leiden has received over 7,000 citations according to Google Scholar.},
which introduces an additional phase, called \textbf{\textit{refinement}}, between the movement and aggregation phases, as shown in Fig. \ref{fig:louvain_leiden}\subref{fig:louvain_leiden:leiden}.
Specifically, during the refinement phase, vertices explore merging with their neighbors within the same community to form sub-communities.
By adding this additional phase, Leiden produces communities with higher quality than Louvain, since its communities well preserve the connectivity.

As shown in the literature, Leiden has recently received plenty of attention because of its applications in many areas, including large language model (LLM) generation \cite{graphrag2025}, anomaly detection \cite{wang2023removing}, and biological analysis \cite{haney2024apoe4}.
For example, Microsoft has recently developed Graph-RAG \cite{graphrag2025}, a retrieval-augmented generation (RAG) method that enhances prompts by searching external knowledge to improve the accuracy and trustworthiness of LLM generation, and builds a hierarchical index by using the communities detected by Leiden.
%
%
As another example, Liu et al. introduced eRiskComm \cite{liu2022eriskcom}, a community-based fraud detection system that assists regulators in identifying high-risk individuals from social networks by using Louvain to partition communities, and Leiden can be naturally applied in this context. 


In the aforementioned application scenarios, the graphs often evolve frequently over time, with many insertions and deletions of vertices and edges.
We highlight two representative such tasks in \Acom.
(1) \emph{Anomaly detection:} This task's pipeline operates on the graph with $10^8$ vertices and $2\times 10^9$ edges, where recomputing Leiden communities from scratch introduces a delay of several hours.
(2) \emph{Social recommendation:} This task works on a graph with $10^8$ vertices and $4\times 10^{9}$ edges.
However, the online monitoring and operational workflows require incremental maintenance for batches of roughly $10^5$ edge updates within a few minutes.
Full recomputation of Leiden communities incurs hour-level delays, while product requirements demand minute-level updates for update batches of this scale.
%
Hence, it is strongly desirable to develop efficient algorithms for maintaining the up-to-date Leiden communities in large dynamic graphs.

{\bf Prior works.}
To maintain Louvain communities in dynamic graphs, several algorithms have been developed, such as DF-Louvain \cite{sahu2024df}, Delta-Screening \cite{zarayeneh2021delta}, DynaMo \cite{zhuang2019dynamo}, and Batch \cite{chong2013incremental}.
However, little attention has been paid to maintaining Leiden communities.
%
To the best of our knowledge, \cite{sahu2024starting} is the only work that achieves this.
It first uses some optimizations for the first iteration of {\tt DF-Leiden}, and then invokes the original Leiden algorithm for the remaining iterations, as depicted in Fig. \ref{fig:inc_methods}\subref{fig:inc_methods:existing}.
Following the optimized movement phase ({\tt opt-movement}), the refinement phase in DF-Leiden separates communities affected by edge or vertex changes into multiple sub-communities, while leaving unchanged communities as single sub-communities.
The aggregation phase remains identical to that of the Leiden algorithm.
After constructing the aggregated graph, the standard Leiden algorithm is applied to complete the remaining CD process.
The authors have also developed two variants of {\tt DF-Leiden}, called {\tt ND-Leiden} and {\tt DS-Leiden}, by using different optimizations for the movement phase of the first iteration.
Nevertheless, there is a lack of detailed theoretical analysis for these algorithms, and they are inefficient for large graphs with few changes.

\begin{figure}[t]
  \small
  \centering
  \subfloat[The process of the incremental algorithms in \cite{sahu2024starting}.]{
    \begin{minipage}[t]{0.48\textwidth}
      \centering
      \includegraphics[width=\linewidth]{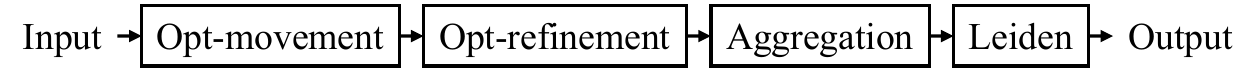}
    \end{minipage}
    \label{fig:inc_methods:existing}
  }
  \hfil
  \subfloat[The process of our \texttt{HIT-Leiden} algorithm.]{
    \begin{minipage}[t]{0.48\textwidth}
      \centering
      \includegraphics[width=\linewidth]{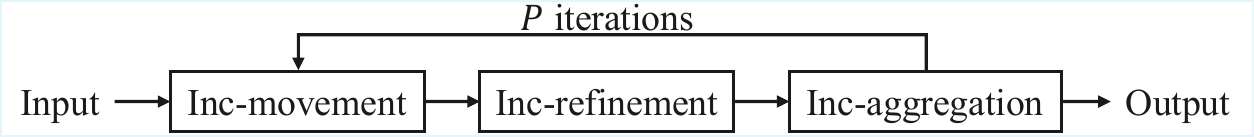}
    \end{minipage}
    \label{fig:inc_methods:hit}
  }
  \caption{Algorithms for maintaining Leiden communities.}
  \label{fig:inc_methods}
\end{figure}


\textbf{Our contributions.}
To address the above limitations, we first theoretically analyze the time cost of existing algorithms for maintaining Leiden communities and theoretically show that they are relatively unbounded via the boundedness analysis, which is a powerful tool for analyzing the time complexity of incremental algorithms on dynamic graphs.
We further analyze the membership of vertices in communities and sub-communities when the graph edges change, and observe that the procedure for maintaining these memberships generalizes naturally to all the supergraphs generated by Leiden.
The above analysis not only lays a solid foundation for us to comprehend existing algorithms but also offers us opportunities to improve upon them.

Based on the above analyses, we develop a novel efficient maintenance algorithm, called \underline{H}ierarchical \underline{I}ncremental \underline{T}ree Leiden (HIT-Leiden), which effectively reduces the range of affected vertices by maintaining the connected components and hierarchical community structures.
As depicted in Fig. \ref{fig:inc_methods}\subref{fig:inc_methods:hit}, {\tt HIT-Leiden} is an iterative algorithm with each iteration having three key phases, namely incremental movement, incremental refinement, and incremental aggregation, abbreviated as {\tt inc-movement}, {\tt inc-refinement}, and {\tt inc-aggregation}, respectively.
More specifically, {\tt inc-movement} extends the movement phase from \cite{sahu2024starting} by incorporating hierarchical community structures \cite{traag2019louvain}.
Unlike prior approaches, it operates on a supergraph where each supervertex represents a sub-community, focusing on hierarchical dependencies between communities and their nested substructures.
%
%
Inspired by the key technique of maintaining the connected components in dynamic graphs \cite{xu2024constant}, {\tt inc-refinement} maintains sub-communities by using tree-based structures to efficiently track changes in sub-communities.
{\tt Inc-aggregation} updates the supergraph by computing structural changes based on the outputs of the previous two phases.


We have evaluated {\tt HIT-Leiden} on several large-scale real-world dynamic graph datasets. 
The experimental results show that our algorithm achieves comparable community quality with the state-of-the-art algorithms for maintaining Leiden communities, while achieving speedups of up to five orders of magnitude over {\tt DF-Leiden}.
In addition, we have deployed our algorithm in real-world applications at \Acom. 

\textbf{Outline.}
We first review related work in Section \ref{sec:related}.
We then formally introduce some preliminaries, including the Leiden algorithm and problem definition in Section \ref{sec:pre}, provide some theoretical analysis in Section \ref{sec:thereom}, and present our proposed {\tt HIT-Leiden} algorithm in Section \ref{sec:hit}.
Finally, we present the experimental results in Section \ref{sec:exp} and conclude in Section \ref{sec:con}.
\section{Related Work}
\label{sec:related}

In this section, we first review the existing works of CD for both static and dynamic graphs.
We simply classify these works as modularity and other metrics-based CD methods.

$\bullet$ \textbf{Modularity-based CD.} 
Modularity-based CD methods aim to partition a graph such that communities exhibit high internal connectivity relative to a null model. 
Among these methods, Louvain \cite{blondel2008fast} is the most popular one due to its high efficiency and scalability, as shown in some comparative analyses \cite{yang2016comparative,amira2023survey}.
Leiden~\cite{traag2019louvain} improves upon Louvain by resolving the problem of disconnected communities, yielding higher-quality results with comparable runtime.
Other methods employ different modularity heuristics \cite{newman2006modularity}, incorporate simulated annealing \cite{kirkpatrick1983optimization}, spectral techniques \cite{newman2013spectral}, or evolutionary strategies \cite{liu2019evolutionary}.
%
%
Recent neural approaches have integrated modularity objectives into deep learning models \cite{zhang2020commdgi,yang2016modularity,xie2018community}, enhancing representation learning for CD.

Besides, some recent works have studied how to incrementally maintain modularity-based communities when the graph is changed.
Aynaud et al. \cite{aynaud2010static} proposed one of the earliest approaches by reusing previous community assignments to warm-start the Louvain algorithm.
Subsequent works extended this idea to both Louvain \cite{zhuang2019dynamo,zarayeneh2021delta,sahu2024df} and Leiden \cite{sahu2024starting}, incorporating mechanisms such as edge-based impact screening or localized modularity updates.
Nevertheless, the existing algorithms for maintaining Leiden communities lack in-depth theoretical analysis, and their practical efficiency is poor.
Other methods based on modularity, including extensions to spectral clustering \cite{chi2009evolutionary}, multi-step CD \cite{aynaud2011multi}, and label propagation-based methods \cite{xie2013labelrankt} have been studied on dynamic graphs.

$\bullet$ \textbf{Other metrics-based CD.}
Beyond modularity, various CD methods have been developed for different optimization purposes, such as similarity, statistical inference, spectral clustering, and neural networks.
The similarity-based methods like SCAN \cite{ester1996density,xu2007scan, wang2025searching} identify dense regions from the graph via structural similarity.
Statistical inference approaches, including stochastic block models \cite{holland1983stochastic, karrer2011stochastic, airoldi2008mixed}, infer communities by fitting generative probabilistic models to observed networks.
Spectral clustering methods \cite{newman2006finding, de2014laplacian} exploit the eigenstructure of graph Laplacians to group nodes with similar structural roles.
Deep learning-based methods for CD have recently gained traction.
Graph convolutional networks \cite{hu2020going, cui2020adaptive, zhao2021graph}, and graph attention networks \cite{fu2020magnn, velivckovic2017graph, wang2021self} have demonstrated strong performance in learning expressive node embeddings for CD tasks.
For more details, please refer to recent survey papers of CD \cite{chakraborty2017metrics,su2022comprehensive}.

Besides, many of the above methods have also been extended for dynamic graphs.
Ruan et al. \cite{ruan2021dynamic} and Zhang et al. \cite{zhang2022effective} have studied structural graph clustering on dynamic graphs, which is based on structural similarity.
Temporal spectral methods \cite{chi2007evolutionary} and dynamic stochastic block models \cite{matias2017statistical} enable statistical modeling of evolving community structures over time.
Recent deep learning approaches also support dynamic CD through mechanisms such as dynamic embeddings \cite{yang2017graph} and contrastive learning \cite{park2022cgc}.
These models capture temporal dependencies and structural evolution.

\section{Preliminaries}
\label{sec:pre}

In this section, we first briefly recap the original Leiden procedure, then briefly introduce the maintenance of Leiden communities problem, and finally introduce the key notations.

\subsection{Leiden algorithm}
\label{sec:problem}

\begin{algorithm}[t]
  \KwIn{$G$, $f(\cdot)$, $P$, $\gamma$}
  
  \KwOut{Updated $f(\cdot)$}
  
  $G^1 \leftarrow G$, $f^1(\cdot) \leftarrow f(\cdot)$\;
  
  \For{$p = 1$ to $P$}{
    $f^{p}(\cdot) \leftarrow Move(G^p, f^p(\cdot), \gamma)$\;
    $s^{p}(\cdot) \leftarrow Refine(G^p, f^p(\cdot), \gamma)$\;
    \If{$p < P$}{
      
      
      $G^{p + 1}, f^{p + 1}(\cdot) \leftarrow Aggregate(G^p, f^{p}(\cdot), s^p(\cdot))$\;
    }
  }
  
  Update $f(\cdot)$ using $s^1(\cdot), \cdots, s^{P}(\cdot)$\;
  
  \KwRet $f(\cdot)$\;

  \caption{Leiden algorithm \cite{platoRepo, sahu2024fast}}
  \label{alg:leiden}
\end{algorithm}

Originally, the Leiden algorithm was designed for static graphs.
It mainly consists of three phases: {\it movement}, {\it refinement}, and {\it aggregation}, which are executed iteratively.
Algorithm~\ref{alg:leiden} \cite{platoRepo, sahu2024fast} presents the Leiden algorithm following the process in Fig.~\ref{fig:louvain_leiden}\subref{fig:louvain_leiden:leiden}. 
Starting from the original graph $G^1=G$ and an initial community mapping $f(\cdot)$, Leiden iterates for $P$ rounds (lines 1-2) where $P$ is a pre-defined parameter that users often set in practice to balance modularity gains against runtime.
$\gamma$ is a hyperparameter to control the scale of communities \cite{reichardt2006statistical}.
In each round $p$, the \textit{movement} phase relocates vertices (or supervertices) to improve modularity, producing an updated mapping $f^p(\cdot)$ from the vertices (or supervertices) in this iteration to communities (line 3).
The \textit{refinement} phase splits each community into connected sub-communities, producing a sub-community mapping $s^p(\cdot)$ from the supervertices in this iteration to the supervertices in the next iteration (line 4).
The \textit{aggregation} phase contracts each sub-community into a supervertex to form the next-level supergraph $G^{p+1}$ (lines 5-6).
After $P$ rounds, the final partition $f(\cdot)$ of $G$ is obtained by composing the mappings across levels.

The overall time complexity of Algorithm~\ref{alg:leiden} is $O\left(P\cdot(|V| + |E|\right))$, since each iteration costs $O(|V| + |E|)$ time \cite{sahu2024fast}, where $V$ and $E$ are the vertex set and edge set of $G$, respectively.

\begin{figure}[t]
  \centering
  \subfloat[All the communities.]{
    \begin{minipage}[t]{0.199938194109126\textwidth}
      \centering
      \includegraphics[width=\linewidth]{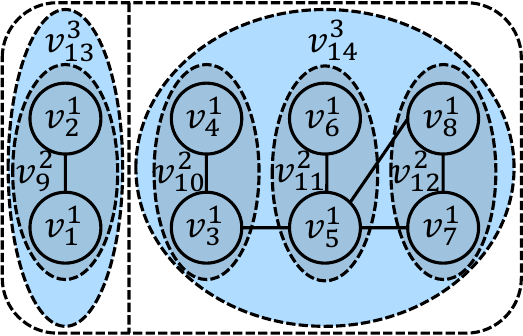}
    \end{minipage}
    \label{fig:hie_structural:graph}
  }
  \hfil
  \subfloat[A tree-like structure.]{
    \begin{minipage}[t]{0.236\textwidth}
      \centering
      \includegraphics[width=\linewidth]{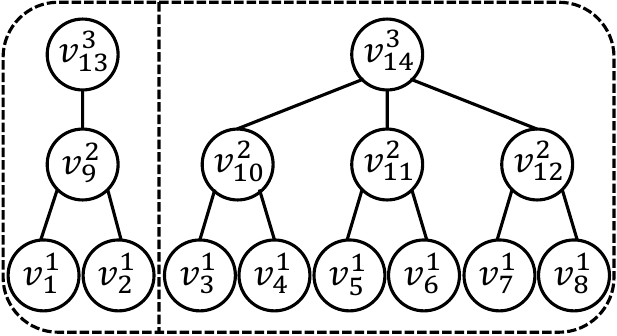}
    \end{minipage}
    \label{fig:hie_structural:hie_structural}
  }
  \caption{The process of Leiden for the graph $G$ in Fig. \ref{fig:mod}\protect\subref{fig:mod:org_graph}.}
  \label{fig:hie_structural}
\end{figure}

\begin{figure*}[t]
  \small
  \centering
  \subfloat[The process of hierarchical partitions at the first iteration on the graph.]{
    \begin{minipage}[t]{0.771047430830039\textwidth}
      \centering
      \includegraphics[width=\textwidth]{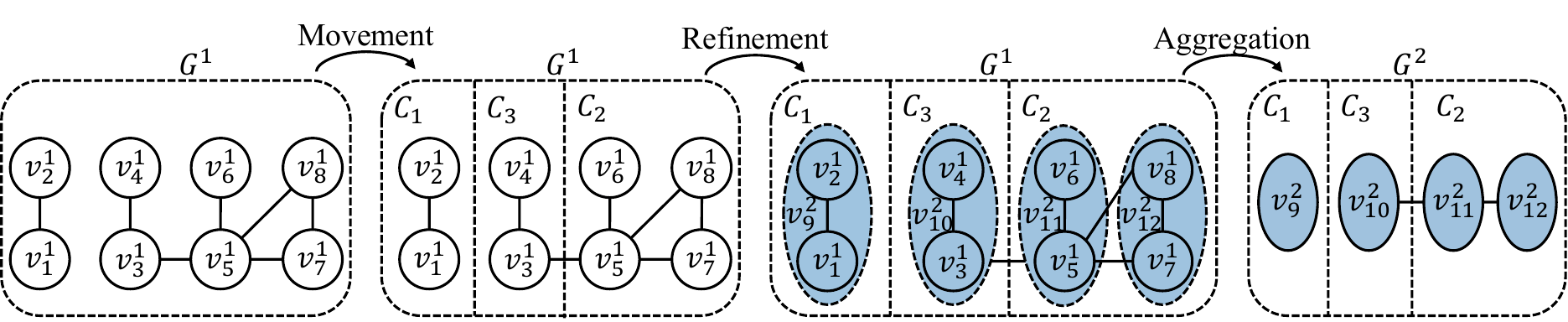}
    \end{minipage}
    \label{fig:l1_pro:graph}
  }
  \hfil
  \subfloat[The tree-like structure.]{
    \begin{minipage}[t]{0.2\textwidth}
      \centering
      \includegraphics[width=0.818181818181818\textwidth]{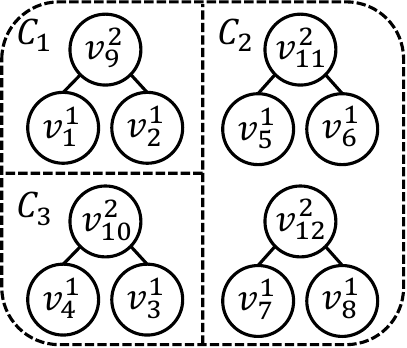}
    \end{minipage}
    \label{fig:l1_pro:hie_structural}
  }
  \caption{The process of hierarchical partitions of Fig. \ref{fig:hie_structural} at level-1 with the Leiden algorithm.}
  \label{fig:l1_pro}
\end{figure*}

\begin{example}
\label{eg:leiden}
Fig. \ref{fig:hie_structural}\subref{fig:hie_structural:graph} depicts the process of Leiden with $P$=3 for the graph in Fig. \ref{fig:mod}\subref{fig:mod:org_graph}.
Let $v_i^p$ denote the supervertex in the $p$-th iteration of Leiden.
It generates three levels of supergraphs: $G^1$, $G^2$, and $G^3$, with $G^1 = G$.
The vertices of these supergraphs form a tree-like structure, as shown in Fig. \ref{fig:hie_structural}\subref{fig:hie_structural:hie_structural}. 

Take the first iteration as an example depicted in Fig. \ref{fig:l1_pro}.
In the movement phase, it generates three communities $C_1 = \{v^1_1, v^1_2\}$, $C_2 =\{ v^1_5, v^1_6, v^1_7, v^1_8\}$ and $C_3 = \{v^1_3, v^1_4\}$.
In the refinement phase, $C_2$ is split into two sub-communities $v^2_{11}=\{v^1_5, v^1_6\}$ and $v^2_{12}=\{v^1_7, v^1_8\}$, and $C_1$ and $C_3$ are unchanged.
In the aggregation phase, all vertices are aggregated into supervertices based on their sub-community memberships, resulting in $G^2$.
\end{example}

\subsection{Maintaining Leiden communities in dynamic graphs}
\label{sec:dynamic_overview}

For simplicity, we call the communities generated by the Leiden algorithm {\it Leiden communities}.
The goal of maintaining Leiden communities is to refresh the communities after each graph update.
We now formally introduce the problem studied in this paper.

\begin{problem}[Maintenance of Leiden communities \cite{sahu2024starting}]
\label{prob:ourProblem}
Given a graph $G$ with its Leiden communities $\mathbb{C}$, and some edge updates $\Delta G$, return the updated Leiden communities after applying $\Delta G$ to $G$.
\end{problem}

As shown in Algorithm~\ref{alg:leiden}, the Leiden algorithm has three phases.
As a result, to maintain Leiden communities when the graph is updated, we can follow the same three-phase procedure, but only make updates in a small affected subgraph:
{\it (1)  Movement phase:} we update the community assignments for affected vertices to improve modularity;
{\it (2) Refinement phase:} we restore the connectivity guarantees by splitting and merging affected parts inside each community;
{\it (3) Aggregation phase:} we update the supergraph and propagate structural changes to higher levels when needed.

However, this task is challenging for two reasons.
First, the refinement phase enforces intra-community connectivity, which requires efficiently tracking connectivity changes under edge insertions or deletions.
Second, because the refinement and aggregation phases are tightly coupled, even a small local update may cascade across multiple hierarchical levels, leading to unstable and potentially large maintenance costs.

\subsection{Notation definition}
\label{subsec:notation}

We consider an \textbf{undirected and weighted graph} $G = (V, E)$, where $V$ and $E$ are the sets of vertices and edges, respectively.
Each vertex $v$'s neighbor set is denoted by $N(v)$.
Each edge $(v_i, v_j)$ is associated with a positive weight $w(v_i, v_j) > 0$.
The degree of $v_i$ is given by $d(v_i) = \sum_{v_j \in N(v_i)} w(v_i, v_j)$.
Denote by $m$ the total weight of all edges in $G$, i.e., $m = \sum_{(v_i, v_j) \in E} w(v_i, v_j)$.
For convenience, Table~\ref{tab:notations_and_meaning} provides a quick reference for frequently used symbols.

Given a graph $G=(V, E)$, the CD process aims to partition all the vertices of $V$ into some disjoint sets $\mathbb{C}$, each of which is called a community, corresponding to a set of vertices that are densely connected.
This process can be modeled as a mapping function $f(\cdot): V \rightarrow \mathbb{C}$, such that each $v$ belongs to a community $f(v)$ of the partition $\mathbb{C}$.
For each vertex $v$, the total weight between $v$ and a community $C$ is denoted by $w(v, C) = \sum_{v' \in N(v) \cap C} w(v, v')$.
%

\begin{table}
  \caption{Frequently used notations and their meanings.}
  \label{tab:notations_and_meaning}
  \begin{tabular}{ M{1.98cm} | L{6.02cm} }
      \hline
      \textbf{Notation} & \textbf{Meaning}\\
      \hline\hline
      $G$ = $(V,E)$               & A graph with vertex set $V$ and edge set $E$ \\
      \hline
      $N(v)$, $N_2(v)$            &The vertex $v$'s 1- and 2-hop neighbor sets, resp.\\
      \hline
      $w(v_i, v_j)$               & The weight of edge between $v_i$ and $v_j$ \\
      \hline
      $d(v)$                      & The weighted degree of vertex $v$\\
      \hline     
      $m$                         & The total weight of all edges in $G$\\
      \hline

      $\mathbb{C}$                &A set of communities forming a partition of $G$ \\
      \hline        
      $Q$                         &The modularity of the graph $G$ with partition $\mathbb{C}$ \\ 
      \hline
      $G^p$=$(V^p, E^p)$          &The supergraph in the $p$-th iteration of Leiden\\
      \hline
      $\Delta Q(v \to C',\gamma)$  & Modularity gain by moving $v$ from $C$ to $C'$ with $\gamma$\\
      \hline
      $f(\cdot)$: $V \to \mathbb{C}$           &A mapping from vertices to communities\\
      \hline        
      $f^p(\cdot)$: $V^p \to \mathbb{C}$       &A mapping from supervertices to communities\\
      \hline        
      $s^p(\cdot)$: $V^p \to V^{p+1}$          &A mapping from supervertices in $p$-th level to supervertices in $(p+1)$-th level (sub-communities)\\
      \hline        
      $V^p$, $E^p$          & A set of supervertices and superedges in $p$-th level\\
      \hline       
      $\Delta G$                  &The set of changed edges in the dynamic graph\\
      \hline
  \end{tabular}
\end{table}

As a well-known CD metric, the modularity measures the difference between the actual number of edges in a community and the expected number of such edges.

\begin{definition}[Modularity~\cite{blondel2008fast}]
\label{def:modularity}
Given a graph $G = (V, E)$ and a community partition $\mathbb{C}$ over $V$, the modularity $Q(G,\mathbb{C},\gamma)$ of the graph $G$ with the partition $\mathbb{C}$ is defined as:
\begin{equation}
      \small
      Q(G,\mathbb{C},\gamma) =  \sum_{C \in \mathbb{C}} \left( \frac{1}{2m} \sum_{v\in C}w(v, C) - \gamma\left(\frac{d(C)}{2m}\right)^2  \right),
      \label{equ:modularity}
\end{equation}
where $d(C)$ is the total degree of all vertices in a community $C$, and $\gamma > 0$ is a hyperparameter.
\end{definition}

Note that the parameter $\gamma > 0$ controls the granularity of the detected communities \cite{reichardt2006statistical}. 
A higher $\gamma$ favors smaller, finer-grained communities.
In practice, $\gamma$ is often set to 0.5, 1, 4, or 32, as shown in \cite{lindeboom2024human}.
Besides, to guide community updates, the concept of modularity gain is often used to capture the changed modularity when a vertex is moved from one community to another.

\begin{definition}[Modularity gain \cite{blondel2008fast}]
\label{def:modularity_gain}
Given a graph $G$, a partition $\mathbb{C}$, and a vertex $v$ that belongs to a community $C$, the modularity gain of moving $v$ from $C$ to another community $C'$ is defined as:
\begin{equation}
      \small
      \begin{aligned}
        \Delta Q(v \to C',\gamma) = &\frac{w(v, C') - w(v, C)}{m} \\
        &+ \frac{\gamma \cdot d(v) \cdot \left( d(C) - d(v) - d(C') \right)}{2 m^2}.
      \end{aligned}
\label{equ:gain}
\end{equation}
\end{definition}

In this paper, we focus on the dynamic graph with insertions and deletions of both vertices and edges.
Since a vertex insertion (resp. deletion) can be modeled as a sequence of edge insertions  (resp. deletions), we simply focus on edge changes.
Given a set of edge changes $\Delta G$ to a graph $G=(V, E)$, we obtain an updated graph $G' = (V', E')$.
Since there are two types of edge updates, we let $\Delta G = \Delta G_+ \cup \Delta G_-$, where $\Delta G_+ = E' \setminus E$ and $\Delta G_- = E \setminus E'$ denote the sets of inserted and deleted edges, respectively.
We denote updated edges $(v_i, v_j, \alpha) \in \Delta G_+$ and $(v_i, v_j, -\alpha) \in \Delta G_-$, where $\alpha$ is positive, i.e., $\alpha > 0$. 
We use $G \oplus \Delta G$ to denote applying $\Delta G$ to $G$, yielding an updated graph $G'$.

\section{Theoretical Analysis of Leiden}
\label{sec:thereom}

In this section, we first analyze the boundedness of existing algorithms, then study how vertex behavior impacts community structure under graph updates, and extend our analysis to supergraphs.
For lack of space, all the proofs of lemmas are shown in the appendix of the full version \cite{hitleiden} of this paper.

\subsection{Boundedness analysis}
\label{sec:boundedness}

We first introduce some concepts related to boundedness.

\textbf{$\bullet$ Notation.}
Let $\Theta$ denote the CD query applied to a graph $G$, where $\Theta(G)=\mathbb{C}$ is the set of detected communities. 
The new graph is $G \oplus \Delta G$, and the updated community is $\Theta(G \oplus \Delta G)$. 
We denote the output difference as $\Delta \mathbb{C}$, where $\Theta(G \oplus \Delta G) = \Theta(G) \oplus \Delta \mathbb{C}$.                             

\textbf{$\bullet$ Concepts of boundedness.} The notion of boundedness \cite{ramalingam1996computational} evaluates the effectiveness of an incremental algorithm using the metric ${\tt CHANGED}$, defined as ${\tt CHANGED} = \Delta G + \Delta \mathbb{C}$, which leads to $|{\tt CHANGED}| = |\Delta G| + |\Delta \mathbb{C}|$.

\begin{definition}[Boundedness \cite{ramalingam1996computational,fan2017incremental}]
  An incremental algorithm is bounded if its computational cost can be expressed as a polynomial function of $|{\tt CHANGED}|$ and $|\Theta|$. Otherwise, it is unbounded.
\end{definition}

\textbf{$\bullet$ Concepts of relative boundedness.} In real-world dynamic graphs, $|{\tt CHANGED}|$ is often small, yet some unbounded algorithms can be solved in polynomial time using measures comparable to $|{\tt CHANGED}|$, making these algorithms feasible. 
To assess these incremental algorithms effectively, Fan et al. \cite{fan2017incremental} introduced the concept of relative boundedness, which leverages a more refined cost model called the affected region.
Let ${\tt AFF}$ denote the affected part, the region of the graph actually processed by the incremental algorithm. 

\begin{definition}[${\tt AFF}$ \cite{fan2017incremental}]
\label{lem:AFF}
  Given a graph $G$, a query $\Theta$, and the input update $\Delta G$ to $G$, ${\tt AFF}$ signifies the cost difference of the static algorithm between computing $\Theta(G)$ and $\Theta(G \oplus \Delta G)$.
\end{definition}

Unlike ${\tt CHANGED}$, ${\tt AFF}$ captures the concrete portion of the graph touched by an incremental algorithm, providing a tighter bound on its computational cost. This leads to the following definition.

\begin{definition}[Relative boundedness \cite{fan2017incremental}]
\label{lem:rel_bound}
  An incremental graph algorithm is relatively bounded to the static algorithm if its cost is polynomial in $|\Theta|$ and $|{\tt AFF}|$.
\end{definition}

We now analyze the boundedness of existing incremental Leiden algorithms.


\begin{theorem}
\label{theo:unbounded}
When processing an edge deletion or insertion, the incremental Leiden algorithms proposed in \cite{sahu2024starting} all cost $O\left(\sum_{p=1}^P \left(|V^p| + |E^p|\right)\right)$.
\end{theorem}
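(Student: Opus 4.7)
The plan is to bound the total cost by decomposing it across the three phases and across the $P$ iterations, and arguing that in the worst case every phase reduces to processing the entire (possibly aggregated) graph.

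First I would handle the first iteration, which is where the three algorithms from \cite{sahu2024starting} differ. Each algorithm only modifies the \emph{movement} phase, using a different rule to mark ``affected'' vertices around the changed edge $(u,v)$: \ND{} marks the $2$-hop neighborhood $N_2(u) \cup N_2(v)$, \DS{} uses delta-screening on the neighborhoods, and \DF{} iteratively propagates affected flags along the dynamic frontier. In each case I would argue that the set of vertices scanned, together with the edges incident to them, is at most $O(|V|+|E|)$: the bound for \ND{} is immediate because $|N_2(u) \cup N_2(v)| \le |V|$ and examining their incident edges costs $O(|E|)$; for \DS{} and \DF{}, the propagation is a BFS-style process that visits each vertex at most once and each edge a constant number of times when scoring modularity gain, again yielding $O(|V|+|E|)$. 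So the optimized movement phase in the first iteration is $O(|V|+|E|)$.

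Next I would observe that the \emph{refinement} and \emph{aggregation} phases of the first iteration are inherited unchanged from the static Leiden algorithm of Algorithm~\ref{alg:leiden}, so each costs $O(|V|+|E|)$ as recalled in Section~\ref{sec:leiden}. Combining the three phases, the first iteration costs $O(|V|+|E|)$.

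For iterations $p = 2, \dots, P$, the authors of \cite{sahu2024starting} fall back to the original Leiden algorithm on the supergraph $G^p = (V^p, E^p)$. Since each sub-community at level $p-1$ becomes a single supervertex, $|V^p| \le |V|$, and since superedges are obtained by merging parallel edges, $|E^p| \le |E|$. By the complexity of a single iteration of static Leiden \cite{sahu2024fast}, each such iteration therefore costs $O(|V^p|+|E^p|) = O(|V|+|E|)$. Summing across all $P$ iterations yields the claimed bound $O(P\cdot(|V|+|E|))$.

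The main obstacle I anticipate is the uniform $O(|V|+|E|)$ bound for the optimized movement phase across all three variants, because their affected-vertex rules look quite different. The cleanest way to handle this is to argue, phase by phase, that the work per touched vertex is proportional to its degree and that each vertex is touched a bounded number of times, so the total work telescopes to $\sum_{v} d(v) = O(|E|)$ plus an $O(|V|)$ initialization overhead; I would phrase this once and invoke it for all three algorithms rather than repeating the analysis.
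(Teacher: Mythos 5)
Your decomposition --- bound the modified first iteration phase by phase, then charge each of the remaining $P-1$ iterations the cost of one static Leiden pass on a supergraph with $|V^p|\le|V|$ and $|E^p|\le|E|$ --- is the natural argument for this statement and matches how the paper describes the algorithms of \cite{sahu2024starting} (an optimized movement phase in iteration one, with refinement, aggregation, and all subsequent iterations inherited from static Leiden at $O(|V|+|E|)$ each). Two caveats. First, you mischaracterize {\tt ND-Leiden}: per the paper it is the \emph{naive dynamic} variant that re-processes \emph{all} vertices in the movement phase seeded with the previous memberships, not a $2$-hop marking rule; this does not damage your bound (it makes the $O(|V|+|E|)$ claim for that phase immediate), but the per-variant justification you give for it is for the wrong algorithm. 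Second, and more substantively for how the theorem is used: the paper invokes this result to conclude in Table~\ref{tab:related_works} that these algorithms are unbounded and relatively unbounded, and an upper bound alone cannot support that conclusion --- one also needs that the cost is $\Omega(P\cdot(|V|+|E|))$ (or at least not polynomial in $|{\tt CHANGED}|$ and $|{\tt AFF}|$) even for a single edge update. Your argument proves only the $O$ direction, which is all the literal statement asks for; to close the loop with the paper's intended use you would add the easy observation that the unmodified refinement and aggregation phases, and the $P-1$ full static Leiden iterations on the aggregated graph, necessarily touch every vertex and (super)edge regardless of $|\Delta G|$.
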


\begin{table}[t]
  \caption{Incremental Leiden algorithms}
  \label{tab:related_works}
  \begin{tabular}{ c | M{4cm} | M{1.4cm} }
      \hline
      \textbf{Method} & \textbf{Time complexity} & \shortstack[c]{\rule{0pt}{1.1em}\textbf{Relative} \\[-2pt] \textbf{boundedness}} \\
      \hline\hline
      {\tt ND-Leiden} \cite{sahu2024starting}     &   $O\left(\sum_{p=1}^P \left(|V^p| + |E^p|\right)\right)$    & \ding{55} \\
      {\tt DS-Leiden} \cite{sahu2024starting}     &   $O\left(\sum_{p=1}^P \left(|V^p| + |E^p|\right)\right)$    & \ding{55} \\
      {\tt DF-Leiden} \cite{sahu2024starting}     &   $O\left(\sum_{p=1}^P \left(|V^p| + |E^p|\right)\right)$    & \ding{55} \\
      \hline
      \textbf{\texttt{HIT-Leiden}}    &   $O\left(|N_2(\texttt{CHANGED})| + |N_2(\texttt{AFF})|\right)$    & \ding{51} \\
      \hline        
  \end{tabular}
\end{table}

By Theorem \ref{theo:unbounded}, the existing algorithms for maintaining Leiden communities are both unbounded and relatively unbounded, as shown in Table \ref{tab:related_works}.
They are very costly for large graphs, even for small updates.
Next, we review the properties of Leiden and then identify {\tt AFF} of Leiden.
\subsection{Vertex optimality and subpartition $\gamma$-density}
\label{subsec:vectex_opt}
As shown in the literature \cite{blondel2008fast,traag2019louvain}, if $s^P(\cdot)=f^P(\cdot)$ after $P$ iterations, Leiden is guaranteed to satisfy the following two properties:
\begin{itemize}
    \item {\bf Vertex optimality:} All the vertices are vertex optimal.
    
    \item {\bf Subpartition $\gamma$-density:} All the communities are subpartition $\gamma$-dense.
\end{itemize}
%

To design an efficient and effective maintenance algorithm for Leiden communities, we analyze the behaviors of vertices and communities when the graph changes as follows.

$\bullet$ {\bf Analysis of vertex optimality.}
We begin with a key concept.

\begin{definition}[Vertex optimality \cite{blondel2008fast}]
\label{def:vertex-optimality}
A community $C \in \mathbb{C}$ is called vertex-optimal if for each vertex $v \in C$ and $C' \in \mathbb{C}$, the modularity gain $\Delta Q(v \to C', \gamma) \leq 0$.
\end{definition}

Next, we introduce an assumption in the maintenance of Louvain communities \cite{zarayeneh2021delta,sahu2024df}:
\begin{assumption}
\label{ass:size_diff}
    The sum of weights of the updated edges is sufficiently small relative to the graph size $m$. 
    %
\end{assumption}

Based on Assumption \ref{ass:size_diff}, prior studies suggest that when the number of edge updates is small relative to the graph size, three heuristics hold:
(1) intra-community edge deletions and inter-community edge insertions could affect vertex-level community membership~\cite{zarayeneh2021delta,sahu2024df};
(2) Inter-community edge deletions and intra-community edge insertions can be ignored \cite{zarayeneh2021delta,sahu2024df};
(3) Vertices directly involved in such edge changes are the most likely to alter their communities~\cite{sahu2024df}.
The heuristics are stated in Observation \ref{lemma:intra-cross-edge}, which can be proved based on Definition \ref{def:vertex-optimality}.

\begin{observation} [\cite{sahu2024df}]
\label{lemma:intra-cross-edge}
Given an intra-community edge deletion $(v_i, v_j, -\alpha)$ or a cross-community edge insertion $(v_i, v_j, \alpha)$, its effect on the community memberships of vertices $v_i$ and $v_j$ cannot be ignored.
\end{observation}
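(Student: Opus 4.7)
The plan is to show that in either of the two edge-change scenarios, the vertex-optimality property guaranteed by Leiden can be violated at $v_i$ (and symmetrically $v_j$), so their memberships cannot be safely ignored. Before the update, Definition~\ref{def:vertex-optimality} gives $\Delta Q(v \to C', \gamma) \le 0$ for every $v$ and every candidate $C'$. My strategy is to plug the post-update values of $w(v_i, \cdot)$, $d(v_i)$, $d(C)$, and $m$ into the closed-form expression in Equation~\eqref{equ:gain} and compare with the pre-update value, then argue that the change in $\Delta Q$ is strictly positive in the relevant direction, so the prior non-positive slack can be consumed.

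For the intra-community deletion $(v_i, v_j, -\alpha)$ with $v_i, v_j \in C$, I would track that $w(v_i, C)$ decreases by $\alpha$, $d(v_i)$ decreases by $\alpha$, $d(C)$ decreases by $2\alpha$, and, by Assumption~\ref{ass:size_diff}, $m$ is essentially unchanged. Substituting into Equation~\eqref{equ:gain}, the first term of $\Delta Q(v_i \to C', \gamma)$ for any $C' \ne C$ gains $+\alpha/(2m)$ (since $-w(v_i, C)$ increases by $\alpha$, while $w(v_i, C')$ is untouched). Hence whenever the pre-update slack $|\Delta Q(v_i \to C', \gamma)|$ is smaller than this first-order increment, the post-update gain becomes strictly positive and $v_i$ is no longer vertex-optimal in $C$; a symmetric argument applies to $v_j$.

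For the cross-community insertion $(v_i, v_j, \alpha)$ with $v_i \in C_i$, $v_j \in C_j$, $C_i \ne C_j$: $w(v_i, C_j)$ increases by $\alpha$, $w(v_i, C_i)$ is unchanged, and $d(v_i)$ increases by $\alpha$. Plugging these into Equation~\eqref{equ:gain} shows the first term of $\Delta Q(v_i \to C_j, \gamma)$ gains $+\alpha/(2m)$. By the same slack-consumption argument, $v_i$ may now strictly benefit from joining $C_j$, and likewise for $v_j$. In both scenarios, the vertex-optimality witness that justified the current membership can be invalidated, which is exactly the statement that the effect on $v_i$ and $v_j$ cannot be ignored.

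The main obstacle is handling the resolution term of $\Delta Q$ rigorously: the second summand in Equation~\eqref{equ:gain} also shifts because $d(v_i)$ and $d(C)$ change. I would bound this shift as $O(\gamma \alpha d(v_i)/(2m)^2)$, which is smaller than the $\alpha/(2m)$ first-order change by a factor of order $\gamma d(v_i)/(2m)$; under Assumption~\ref{ass:size_diff} this factor is $o(1)$, so the first-order term dominates and the direction of the inequality is preserved. Once this comparison is made precise, the remainder of the argument is direct substitution in Definition~\ref{def:modularity_gain}, and the two symmetric cases for $v_j$ follow by an identical calculation.
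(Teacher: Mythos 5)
The paper does not actually spell out a proof of this observation: it is imported from the cited prior work and the text only remarks that it ``can be proved based on Definition~\ref{def:vertex-optimality}.'' Your proposal fleshes out exactly that intended argument --- perturb the two terms of Equation~\eqref{equ:gain} under the edge change and show the slack in the vertex-optimality inequality $\Delta Q(v_i\to C',\gamma)\le 0$ can be consumed --- so in approach you are aligned with the paper, and the first-order part of your calculation (the $+\alpha/(2m)$ shift of the cut term in both the deletion and insertion cases) is correct.

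The one concrete slip is in your bound on the shift of the resolution term. For the intra-community deletion, writing the second summand of \eqref{equ:gain} as $\gamma\,d(v_i)\bigl(d(C)-d(v_i)-d(C')\bigr)/(2m)^2$ and substituting $d(v_i)\to d(v_i)-\alpha$, $d(C)\to d(C)-2\alpha$, the product changes by $\gamma\alpha\bigl(d(C')-d(C)+\alpha\bigr)/(2m)^2$: the $d(v_i)$ contributions cancel and the shift is governed by $\lvert d(C)-d(C')\rvert$, not by $d(v_i)$ as you claim. Consequently the ratio of the second-order shift to the first-order gain is of order $\gamma\,\lvert d(C)-d(C')\rvert/(2m)$, which Assumption~\ref{ass:size_diff} alone does not make $o(1)$; it additionally requires community degrees (scaled by $\gamma$, which the paper takes as large as $32$) to be small relative to $2m$. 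This does not sink the qualitative conclusion --- for the existence claim that the memberships of $v_i,v_j$ ``cannot be ignored'' it suffices that the gain can flip sign, e.g.\ when the pre-update slack is near zero and $d(C)\approx d(C')$, and the same correction applies to the insertion case --- but you should either state the extra smallness assumption on $\gamma\,d(C)/(2m)$ explicitly or retreat to the existence formulation rather than asserting that the first-order term always dominates.
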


We further derive the propagation of community changes from Observation \ref{lemma:intra-cross-edge}.

\begin{lemma}
\label{lemma:neighbor-change}
When a vertex $v$ changes its community to $C$, then the communities of its neighbors not in $C$ in the updated graph could be affected.
\end{lemma}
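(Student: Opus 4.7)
The plan is to analyze how the relocation of $v$ into $C$ perturbs the per-neighbor modularity-gain expression, and to argue that this perturbation can invalidate the prior vertex optimality of any neighbor $u \in N(v)$ with $u \notin C$. I would fix such a neighbor $u$ and let $C_u = f(u)$ denote its current community. Applying Definition \ref{def:vertex-optimality} to $u$ in the pre-update partition gives $\Delta Q(u \to C, \gamma) \le 0$. The move of $v$ updates three quantities appearing in Equation \eqref{equ:gain} for $\Delta Q(u \to C, \gamma)$: $w(u, C)$ rises by $w(u, v) > 0$ because $v$ is now in $C$ and $(u, v) \in E$; $d(C)$ rises by $d(v)$; and, depending on whether $v$ previously lay in $C_u$, both $w(u, C_u)$ and $d(C_u)$ either drop by $w(u, v)$ and $d(v)$ respectively, or stay the same.

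Substituting these updates into Equation \eqref{equ:gain} and simplifying would yield the clean form
\[
\Delta Q'(u \to C, \gamma) \;=\; \Delta Q(u \to C, \gamma) \;+\; \frac{c_{u,v}\, w(u, v)}{2m} \;-\; \frac{c_{u,v}\, \gamma \cdot d(u) \cdot d(v)}{(2m)^2},
\]
where $c_{u,v} = 2$ if $v \in C_u$ and $c_{u,v} = 1$ otherwise. The positive correction is of order $1/m$ while the negative correction is of order $1/m^2$, so Assumption \ref{ass:size_diff} allows the former to dominate in a broad parameter regime. When that happens, $\Delta Q'(u \to C, \gamma)$ becomes strictly positive, the current assignment $f(u) = C_u$ is no longer vertex-optimal with respect to $C$, and $u$ must be re-examined: it may move to $C$, or, by propagation of the same argument to its other neighboring communities, to a different community. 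Either way, the community of $u$ in the updated graph is affected.

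The main obstacle is the bookkeeping across the two sub-cases $v \in C_u$ and $v \notin C_u$, since $w(u, C_u)$ and $d(C_u)$ update differently in each, and articulating the ``could be affected'' clause rigorously. Because the lemma is existential, it suffices to exhibit parameter regimes of $w(u, v)$, $d(u)$, $d(v)$, $m$, and $\gamma$ in which the sign of $\Delta Q'(u \to C, \gamma)$ flips; I would conclude by observing that the same derivation applies uniformly to every $u \in N(v) \setminus C$, which matches the lemma's statement.
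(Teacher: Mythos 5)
Your proposal is correct, but it takes a genuinely different route from the paper. The paper's proof never touches the gain formula: it recasts the relocation of $v$ into $C$ as a bundle of virtual edge events seen from each neighbor's perspective (for a neighbor in $v$'s old community, the edge $(v,u)$ acts as a deleted intra-community edge; for a neighbor in neither community, as an inserted cross-community edge; for a neighbor already in $C$, as a deleted cross-community plus inserted intra-community edge), and then invokes Observation~\ref{lemma:intra-cross-edge} to conclude that exactly the neighbors outside $C$ fall into the non-ignorable cases. You instead inline the underlying computation: you perturb $\Delta Q(u \to C,\gamma)$ from Equation~\eqref{equ:gain} directly, and your bookkeeping is right --- the correction $\frac{c_{u,v}\,w(u,v)}{2m}-\frac{c_{u,v}\,\gamma\,d(u)\,d(v)}{(2m)^2}$ with $c_{u,v}\in\{1,2\}$ is exactly what the move induces, and since the lemma only asserts that neighbors \emph{could} be affected, exhibiting a regime where this flips the sign of a previously non-positive gain suffices. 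What your approach buys is an explicit, quantitative perturbation formula that also explains \emph{why} the cross-community insertion heuristic holds (the $1/m$ term dominates the $1/m^2$ term); what the paper's approach buys is modularity and brevity, reusing Observation~\ref{lemma:intra-cross-edge} so that the same screening rule governs both real edge updates and virtual ones induced by vertex moves. One small caveat: the dominance of the positive term rests on degrees being small relative to $m$ rather than literally on Assumption~\ref{ass:size_diff}, which concerns the weight of updated edges; this does not harm the existential claim, but the citation is slightly off.
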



Based on these analyses, we develop a novel movement phase, called {\tt inc-movement} in {\tt HIT-Leiden} to preserve vertex optimality, which will be introduced in Section \ref{subs:hifm}.

$\bullet$ {\bf Analysis of subpartition $\gamma$-density.} 
For simplified analysis, we first introduce $\gamma$-order and $\gamma$-connectivity, which are key concepts for defining subpartition $\gamma$-density.

\begin{definition}[$\gamma$-order]
\label{def:go}
%
Given two vertex sequences $X$ and $Y$ of a graph $G$, let $X \otimes Y$ represent that $Y$ is merged into $X$ such that $2m \cdot w(X, Y) \geq \gamma \cdot d(X)\cdot d(Y)$, where $w(X, Y) = \sum_{v_i \in X} \sum_{v_j \in Y} w(v_i, v_j)$.
A $\gamma$-order of a vertex sequence $U = \{v_1, \cdots, v_x\}$ represents the merged sequence starting from singleton sequences $\{v_1\}, \cdots, \{v_x\}$. 
\end{definition}

We can maintain one $\gamma$-order per sub-community from {\tt Leiden}, which is represented by the sequence of vertices merging into the sub-community in {\tt refinement} phase of {\tt Leiden}.

\begin{definition}[$\gamma$-connectivity \cite{traag2019louvain}]
\label{def:gcc}
Given a graph $G$, a vertex sequence $U$ is $\gamma$-connected if $U$ can be generated from at least one $\gamma$-order. 
\end{definition}



\begin{definition}[Subpartition $\gamma$-density \cite{traag2019louvain}]
\label{def:sgd}
A vertex subsequence $U \subseteq C \in \mathbb{C}$ is subpartition $\gamma$-dense if $U$ is $\gamma$-connected, and any intermediate vertex sequence $X$ is locally optimized, i.e., $\Delta Q(X \to \emptyset,\gamma) \leq 0$.
\end{definition}

Notably, $\Delta Q(X \to \emptyset,\gamma) \leq 0$ denotes the modularity gain of moving $X$ from $C$ to an empty set, whose calculation follows the same formula as the standard modularity gain in \eqref{equ:gain}.


\begin{figure}[t]
  \centering
  \subfloat[A triangle.]{
    \begin{minipage}[t]{0.15\textwidth}
      \centering
      \includegraphics[width=0.537801333333334\textwidth]{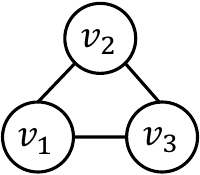}
    \end{minipage}
    \label{fig:triangle:org}
  }
  \subfloat[Delete an edge.]{
    \begin{minipage}[t]{0.15\textwidth}
      \centering
      \includegraphics[width=0.537801333333334\textwidth]{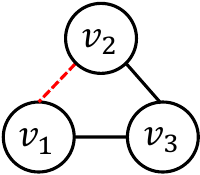}
    \end{minipage}
    \label{fig:triangle:cur}
  }
  \subfloat[Delete two edges.]{
    \begin{minipage}[t]{0.15\textwidth}
      \centering
      \includegraphics[width=0.537801333333334\textwidth]{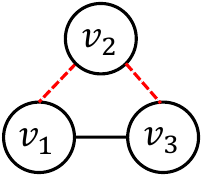}
    \end{minipage}
    \label{fig:triangle:cur2}
  }
  \caption{An example illustrating subpartition $\gamma$-density.}
  \label{fig:triangle}
\end{figure}

\begin{example}
\label{eg:go}
The triangle in Fig. \ref{fig:triangle}\subref{fig:triangle:org} is subpartition $\gamma$-dense with $\gamma = 1$ since there are six different $\gamma$-orders.
For instance, one is $\{v_3\} \otimes (\{v_1\} \otimes \{v_2\})$, which represents that ${v_2}$ is merged into $\{v_1\}$ generating sequence $\{v_1, v_2\}$, and then $\{v_1, v_2\}$ is merged into ${v_3}$ generating $\{v_1, v_2, v_3\}$.
After deleting the edge $(v_1, v_2)$, although $\{v_3\} \otimes (\{v_1\} \otimes \{v_2\})$ is not a $\gamma$-order, the updated graph is still subpartition $\gamma$-dense since $\{v_1\} \otimes (\{v_2\} \otimes \{v_3\})$ is a $\gamma$-order in the updated graph.
After continuing to delete the edge $(v_2, v_3)$, the updated graph is not subpartition $\gamma$-dense since $v_2$ is not connected to $v_1$ and $v_3$.
\end{example}

In essence, each community $C$ (or sub-community $S$) of Leiden is subpartition $\gamma$-dense, since (1) any sub-community in $C$ (or $S$) is locally optimized, and (2) all sub-communities are $\gamma$-connected. 
Notably, as shown in Fig. \ref{fig:inc_methods}\subref{fig:inc_methods:hit}, vertex optimality ensures the first condition by design since any sub-community will be a supervertex in {\tt inc-movement} of the next iteration.
Thus, we will develop a new refinement algorithm, {\tt inc-refinement}, to preserve $\gamma$-connectivity of sub-communities.

Next, we analyze the $\gamma$-connectivity property under two kinds of graph updates, i.e., \emph{edge deletion} and \emph{edge insertion}.
%
%
For any vertex $v_i$ within a sub-community $S_i$ with a $\gamma$-order, we denote an intermediate subsequence of the $\gamma$-order containing $v_i$ by $I_i \subseteq S_i$, and the subsequence $U_i = I_i \setminus \{v_i\}$ is an intermediate subsequence of the $\gamma$-order before merging $v_i$.
%

\textit{\underline{(1) Edge deletion.}}
We consider the deletions of both intra-sub-community edges and cross-sub-community edges:
\begin{lemma}
    Given an intra-sub-community edge deletion $(v_i, v_j, -\alpha)$, assume $v_j$ is before $v_i$ in the $\gamma$-order of the sub-community.
    The effects of the edge deletion can be described by the following four cases:
    \begin{enumerate}
        \item $v_i$ could be removed from its sub-community only if $\alpha > \frac{2m \cdot w(v_i, U_i) - \gamma \cdot d(v_i) \cdot d(U_i)}{4m + 2w(v_i, U_i)}$;
        
        \item $v_j$ could be removed from its sub-community only if $\alpha > m - \frac{ \gamma \cdot d(v_j) \cdot d(U_j)}{2w(v_j, U_j)}$;
        
        \item For any $v_k \in S_i$ ($k \neq i, j$), it could be removed from its sub-community only if $\alpha > m - \frac{\gamma \cdot d(v_k) \cdot d(U_k)}{2w(v_k, U_k)}$;
        
        \item For any $v_l \notin S_i$, it should be removed from its sub-community if and only if $\alpha > m - \frac{\gamma \cdot d(v_l) \cdot d(U_l)}{2w(v_l, U_l)}$.
    \end{enumerate}
    \label{lem:intra_sub_e_del}
\end{lemma}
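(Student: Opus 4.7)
My plan is to unroll the $\gamma$-order merge condition $2m \cdot w(X, Y) \geq \gamma \cdot d(X) \cdot d(Y)$ from Definition \ref{def:go} at the single merge step that introduces each of the four vertex types into its sub-community, track how the four quantities $m$, $w(v, U_v)$, $d(v)$, $d(U_v)$ shift under the deletion $(v_i, v_j, -\alpha)$, and in each case extract a lower bound on $\alpha$ under which the previously valid merge may be violated in $G' = G \oplus \Delta G$. The central observation is that the $\gamma$-order is fixed by $G$, so ``$v$ could be removed from its sub-community'' precisely when the merge step that brought $v$ into $U_v$ fails in the updated graph.

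The first move is to tabulate, per case, which of the four quantities change. For $v_l \notin S_i$ (case 4), only $m$ decreases by $\alpha$; the failure condition is linear in $\alpha$ and we recover the stated threshold as an exact iff. For $v_k \in S_i$ with $k \neq i, j$ (case 3), $m$ decreases by $\alpha$ and $d(U_k)$ may additionally decrease by $0$, $\alpha$, or $2\alpha$ depending on which of $v_i, v_j$ lie in $U_k$, while $d(v_k)$ and $w(v_k, U_k)$ are untouched; since $d(U_k)$ only shrinks, upper-bounding the right-hand side of the failure condition by $\gamma d(v_k) d(U_k)$ (the pre-update value) yields the stated necessary-only threshold. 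For $v_j$ (case 2), $m$ and $d(v_j)$ each decrease by $\alpha$, while $w(v_j, U_j)$ and $d(U_j)$ are untouched since $v_i$ lies after $v_j$ in the $\gamma$-order and hence $v_i \notin U_j$; the same slackening $\gamma(d(v_j) - \alpha) d(U_j) \leq \gamma d(v_j) d(U_j)$ recovers the stated threshold. For $v_i$ (case 1), since $v_j \in U_i$, all four quantities drop by $\alpha$, which is where the real work lies.

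For case 1, the failure inequality $2(m-\alpha)(w(v_i, U_i) - \alpha) < \gamma (d(v_i) - \alpha)(d(U_i) - \alpha)$ is quadratic in $\alpha$, so I slacken both sides. First, the RHS is bounded above by $\gamma d(v_i) d(U_i)$, since both factors only shrink. Second, the LHS expands to $2m w(v_i, U_i) - 2(m + w(v_i, U_i)) \alpha + 2\alpha^2$, which is bounded below by $2m w(v_i, U_i) - (4m + 2 w(v_i, U_i)) \alpha$, because their difference is $2\alpha^2 + 2m\alpha \geq 0$. Chaining the two inequalities and solving for $\alpha$ gives exactly $\alpha > \frac{2m w(v_i, U_i) - \gamma d(v_i) d(U_i)}{4m + 2 w(v_i, U_i)}$, as claimed.

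The main obstacle is that this LHS lower-bound is quite slack: naive first-order reasoning that simply drops $2\alpha^2$ gives denominator $2m + 2 w(v_i, U_i)$, not $4m + 2 w(v_i, U_i)$, and the two are not comparable in the direction needed for a valid necessary condition. The correct move is to absorb $-2\alpha^2$ into $-2m\alpha$ via the trivial bound $2\alpha^2 \geq -2m\alpha$, which is weak but preserves the ``only if'' direction and produces precisely the denominator stated in the lemma. Identifying this specific slackening, rather than pursuing a cleaner Taylor expansion or attempting to solve the quadratic in closed form, is what makes the case-1 bound fall out in the exact form claimed.
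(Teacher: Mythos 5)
Your proposal is correct and follows essentially the same route as the paper's proof: the merge condition $2m\,w(v,U)\ge\gamma\,d(v)\,d(U)$ you unroll is, after clearing denominators, exactly the paper's test $\Delta M(v\to\emptyset,\gamma)\le 0$, and your per-case bookkeeping of which of $m$, $w(v,U)$, $d(v)$, $d(U)$ shift, together with the slackenings that drop the decrements on the right-hand side (and, in Case~1, absorb the quadratic term to produce the $4m+2w(v_i,U_i)$ denominator), mirrors the paper's derivation and yields identical thresholds. The only cosmetic divergence is that the paper decrements $w(v_i,U_i)$ by $2\alpha$ in Case~1 where you use $\alpha$, but both bookkeepings lead to the same stated bound.
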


\begin{lemma}
    Given a cross-sub-community edge deletion $(v_i, v_j, -\alpha)$, the effects of the edge deletion can be described by the following four cases:
    \begin{enumerate}
        \item $v_i$ could be removed from its sub-community only if $\alpha > m - \frac{\gamma \cdot d(v_i) \cdot d(U_i)}{2w(v_i, U_i)}$;
        
        \item $v_j$ holds similar behavior with $v_i$;
        
        \item For any $v_k \in S_i \cup S_j$ ($k \neq i, j$), it could be removed from its sub-community only if $\alpha > m - \frac{\gamma \cdot d(v_k) \cdot d(U_k)}{2w(v_k, U_k)}$;
        
        \item For any $v_l \notin S_i \cup S_j$, it could be removed from its sub-community only if $\alpha > m - \frac{\gamma \cdot d(v_l) \cdot d(U_l)}{2w(v_l, U_l)}$.
    \end{enumerate}
    \label{lem:cross_sub_e_del}
\end{lemma}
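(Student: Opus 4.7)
The plan is to track how the $\gamma$-order inequality $2m \cdot w(X, Y) \geq \gamma \cdot d(X) \cdot d(Y)$ evolves after removing the cross-sub-community edge $(v_i, v_j, -\alpha)$ and, for each of the four kinds of vertex $v$, derive a necessary condition on $\alpha$ for the merging step at $v$ to break. Since a sub-community loses $\gamma$-connectivity as soon as any intermediate merge step fails, failure at the step merging a vertex $v$ into its prefix $U$ is a necessary condition for $v$ being removed from its sub-community.

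First I would record how the update perturbs the primitives: the total weight becomes $m' = m - \alpha$, the endpoint degrees shrink to $d'(v_i) = d(v_i) - \alpha$ and $d'(v_j) = d(v_j) - \alpha$, and every other vertex degree is unchanged. Because the edge is cross-sub-community, i.e. $v_j \notin S_i$ and $v_i \notin S_j$, any weight $w(v, U)$ with $v$ and $U$ both contained in a single $S_i$ or $S_j$ cannot include the deleted edge, so such weights are preserved. The only way a prefix total changes is $d'(U_k) = d(U_k) - \alpha$ if $v_i$ or $v_j$ happens to lie in $U_k$, and $d'(U_k) = d(U_k)$ otherwise.

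Next I would apply the updated inequality case by case. For case 1, since $v_j \notin S_i \supseteq U_i$, both $w(v_i, U_i)$ and $d(U_i)$ are preserved, so the failure condition reduces to $2(m - \alpha)\, w(v_i, U_i) < \gamma\, (d(v_i) - \alpha)\, d(U_i)$; bounding $d(v_i) - \alpha \leq d(v_i)$ on the right hand side weakens this to $2(m - \alpha)\, w(v_i, U_i) < \gamma \cdot d(v_i) \cdot d(U_i)$, which rearranges to the stated bound on $\alpha$. Case 2 is entirely symmetric in $v_j$. For case 3, since $v_k$ is not an endpoint of the deleted edge, both $d(v_k)$ and $w(v_k, U_k)$ are unchanged; $d(U_k)$ may drop by at most $\alpha$, but replacing $d'(U_k)$ with the larger $d(U_k)$ yields the same necessary bound. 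For case 4, the deleted edge is disjoint from the sub-community of $v_l$ and from all its prefixes, so only $m$ changes and substituting $m' = m - \alpha$ recovers the stated bound directly.

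The main subtlety will be case 3, where $d(U_k)$ genuinely shifts whenever $v_i$ or $v_j$ lies in the prefix, and also case 1 where the endpoint degree itself shifts; the resolution is that the lemma only asserts an \emph{only if} (necessary) condition, so replacing each updated degree term on the right hand side with the larger original value preserves the direction of implication without losing validity. A second point to handle carefully is to verify that the $U$ referenced in each case is precisely the prefix of the $\gamma$-order immediately before the vertex in question is merged, so that the failure of the inequality at this particular step is what forces the vertex out of its sub-community.
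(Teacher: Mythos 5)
Your proposal is correct and follows essentially the same route as the paper's proof: the $\gamma$-order merge inequality $2m\,w(U,\{v\}) \ge \gamma\, d(U)\, d(v)$ you track is exactly the condition $\Delta M(v \to \emptyset,\gamma) \le 0$ that the paper uses (they coincide after clearing the $4(m-\alpha)^2$ denominator), and your case-by-case perturbation of $m$, the endpoint degrees, and the prefix degrees, followed by dropping the favorable $-\alpha$ terms to obtain the necessary (``only if'') bounds, mirrors the paper's derivation of $Y < X + 2\alpha\, w(v,U)$ in each case. The two subtleties you flag (the one-sided relaxation and the identification of $U$ with the prefix immediately preceding the merge of $v$) are handled the same way in the paper.
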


\textit{\underline{(2) Edge insertion.}}
%
We consider the insertion of edges containing the insertions of both intra-sub-community edges and cross-sub-community edges:
\begin{lemma}
  Given an edge insertion $(v_i, v_j, \alpha)$, the effects of the edge insertion can be described by the following four cases:
    \begin{enumerate}
        \item $v_i$ could be removed from its sub-community only if $\alpha > \frac{4}{\gamma}m - d(I_i)$ or $\alpha > \frac{2w(v_i, U_i)}{\gamma \cdot d(U_i)} \cdot m - d(v_i)$;
        
        \item  $v_j$ could be removed from its sub-community, only if $\alpha > \frac{2w(v_j, U_j)}{\gamma \cdot d(U_j)} \cdot m - d(v_j)$;
        
        \item For any $v_k \in S_i \cup S_j$ ($k \neq i, j$), it could be removed from its sub-community, only if $\alpha > \frac{w(v_k, U_k)}{\gamma \cdot d(v_k)} \cdot m - \frac{1}{2}d(U_k)$;


        \item For any $v_l \notin S_i \cup S_j$, it is unaffected.
        
    \end{enumerate}
  \label{lem:sub_e_ins}
\end{lemma}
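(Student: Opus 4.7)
The plan is to reduce each case to analyzing when the $\gamma$-order condition $2m\cdot w(X,Y)\geq \gamma\cdot d(X)\cdot d(Y)$ of Definition~\ref{def:go} is invalidated for the merge step that attaches the given vertex to its current sub-community. By Definition~\ref{def:sgd}, every Leiden sub-community is subpartition $\gamma$-dense and therefore $\gamma$-connected, so a vertex $v$ can only be removed from its sub-community if the $\gamma$-order certifying $v$'s membership breaks after inserting $(v_i,v_j,\alpha)$. Throughout, I will treat $m'=m+\alpha$ as $m$ in the denominator whenever the leading-order expression suffices; this is the same convention used in Lemmas~\ref{lem:intra_sub_e_del} and~\ref{lem:cross_sub_e_del}, and Assumption~\ref{ass:size_diff} justifies it.

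\textbf{Case 1 ($v_i$).} The relevant step is $\{v_i\}\otimes U_i$, whose survival requires $2m\,w(v_i,U_i)\geq \gamma\,d(v_i)\,d(U_i)$. When $v_j\notin U_i$, the insertion updates only $d(v_i)\mapsto d(v_i)+\alpha$, and the condition becomes $2m\,w(v_i,U_i)\geq \gamma(d(v_i)+\alpha)\,d(U_i)$; rearranging its negation gives exactly the second stated bound $\alpha>\frac{2 w(v_i,U_i)}{\gamma d(U_i)}m-d(v_i)$. The first bound arises from the weaker structural consequence $d(I_i)\leq \frac{4m}{\gamma}$, obtained by plugging $w(v_i,U_i)\leq\min(d(v_i),d(U_i))$ into the merge inequality; since $v_i$'s endpoint adds $\alpha$ to $d(I_i)$ whenever $v_j\notin U_i$, failure of this bound becomes $\alpha>\frac{4m}{\gamma}-d(I_i)$. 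For the sub-case $v_j\in U_i$, all three of $d(v_i)$, $d(U_i)$, and $w(v_i,U_i)$ update simultaneously; I will argue that whenever the resulting coupled inequality fails, one of the two linear relaxations above already does, so the stated "or" covers both sub-cases.

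\textbf{Cases 2--4.} Case 2 is symmetric to Case 1 via the merge $\{v_j\}\otimes U_j$ and the update $d(v_j)\mapsto d(v_j)+\alpha$, producing the single stated bound. For Case 3, the insertion does not change $d(v_k)$ but may increase $d(U_k)$ and $w(v_k,U_k)$ by at most $\alpha$ if $v_i$ or $v_j$ lies in $U_k$; replaying the failure analysis for $\{v_k\}\otimes U_k$ under these milder updates yields the weaker threshold $\alpha>\frac{w(v_k,U_k)}{\gamma d(v_k)}m-\frac{1}{2}d(U_k)$, which reflects that $v_k$ is only indirectly affected. For Case 4, none of $d(v_l)$, $d(U_l)$, or $w(v_l,U_l)$ touches the new edge, so every merge in $v_l$'s $\gamma$-order survives up to the negligible $m\to m+\alpha$ shift allowed by Assumption~\ref{ass:size_diff}, confirming that $v_l$ is unaffected.

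\textbf{Main obstacle.} The hardest step will be the $v_j\in U_i$ branch of Case 1, where the merge inequality becomes a quadratic in $\alpha$ with three coupled quantities. Rather than solving the quadratic exactly, the plan is to show that its violation forces a violation of one of the two linear relaxations used above, so the necessary-condition form stated in the lemma still holds. This mirrors the technique applied to edge deletion in Lemmas~\ref{lem:intra_sub_e_del} and~\ref{lem:cross_sub_e_del}, so reusing the same machinery should keep the argument compact.
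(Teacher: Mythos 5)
Your overall strategy---checking, for each vertex, whether the inequality certifying its merge step in the $\gamma$-order can flip after the insertion---is the same engine the paper uses (the paper phrases it as testing whether $\Delta M(v\to\emptyset,\gamma)$ can become positive, which is the same condition). But your plan for Case 1 has a concrete logical gap. You derive the first threshold $\alpha>\frac{4}{\gamma}m-d(I_i)$ by relaxing the merge inequality to the structural consequence $d(I_i)\le\frac{4m}{\gamma}$ and then asking when that relaxation fails. Since the merge inequality \emph{implies} the relaxation, failure of the relaxation is \emph{sufficient} for the merge inequality to fail, not necessary---so this route cannot establish the ``only if'' direction the lemma asserts. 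The paper instead obtains this bound by exact algebra: writing the post-insertion gain times $4(m+\alpha)^2$ as $Y = X + \alpha\bigl(\gamma(d(I_i)+\alpha)-2w(v_i,U_i)-4\alpha-4m\bigr)$ with $X\le 0$ the pre-update quantity, so $Y>0$ forces $\gamma(d(I_i)+\alpha)-4m>0$. Relatedly, you have the sub-cases reversed: the $\frac{4}{\gamma}m-d(I_i)$ bound is produced precisely by the branch you defer as the ``main obstacle'' ($v_j\in U_i$, i.e., an intra-sub-community insertion with $v_i$ merged after $v_j$), while the second bound comes from the cross-sub-community branch ($v_j\notin S_i$). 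There is no coupled quadratic to dodge---after factoring out $X\le0$, only the sign of the coefficient of $\alpha$ matters.

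Two smaller issues. First, the paper does \emph{not} approximate $m+\alpha$ by $m$ in Lemmas~\ref{lem:intra_sub_e_del}--\ref{lem:sub_e_ins}; it keeps $(m\pm\alpha)$ exactly and the thresholds fall out of an exact decomposition $Y=X+\alpha(\cdots)$ followed by discarding only provably negative terms. Your leading-order convention is therefore not ``the same convention'' and would need separate justification wherever the dropped terms have the wrong sign. Second, Case 4 needs no appeal to Assumption~\ref{ass:size_diff}: increasing $m$ only strengthens the merge inequality for an untouched vertex, and the paper shows $Y_{v_l}=X_{v_l}-2\alpha\,w(v_l,U_l)<0$ unconditionally. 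In Case 3 the $\tfrac12 d(U_k)$ term arises because $d(U_k)$ can increase by $2\alpha$ (both endpoints inside $U_k$), not by ``at most $\alpha$'' as you state; with only $+\alpha$ you would derive a different threshold.
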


\begin{observation}
\label{obs:ord}
    In the refinement phase of Leiden algorithms, each vertex $v$ tends to be merged into the sub-community (intermediate subsequence $U$), offering more edge weights $w(v, U)$ and smaller degrees $d(U)$.
    Therefore, the differences in the values of $d(v)$, $w(v, U)$, and $d(U)$ are typically very small.
\end{observation}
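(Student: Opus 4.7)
The plan is to decompose Observation \ref{obs:ord} into two claims: (i) the greedy refinement rule prefers sub-communities $U$ offering larger $w(v, U)$ and smaller $d(U)$ when merging a vertex $v$; (ii) under ascending-degree traversal, the triple $(d(v), w(v, U), d(U))$ evolves slowly across merge steps so that their pairwise differences stay small.

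For (i), I would appeal directly to the modularity-gain formula in Equation \eqref{equ:gain}, noting that $\Delta Q(v \to U, \gamma)$ is monotone increasing in $w(v, U)$ and monotone decreasing in $d(U)$ through the penalty term $-\gamma \cdot d(v) \cdot d(U) / (2m)^2$. Equivalently, the $\gamma$-order condition $2m \cdot w(v, U) \geq \gamma \cdot d(v) \cdot d(U)$ from Definition \ref{def:go} is easier to satisfy for precisely such sub-communities. Combined with the greedy rule of the refinement phase, this yields claim (i) directly.

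For (ii), the idea is that if vertices are visited in ascending order of degree, then any candidate $U$ into which $v$ is merged consists entirely of previously-processed (hence no-larger-degree) vertices, so $d(U)$ scales with $d(v)$ up to the size of $U$. The inequality from (i) further forces $w(v, U)$ to scale like $\gamma \cdot d(v) \cdot d(U)/(2m)$. I would chain these scalings to conclude that $d(v)$, $w(v, U)$, and $d(U)$ all grow at comparable rates along the traversal, hence exhibit small pairwise differences at consecutive merge steps.

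The main obstacle will be the quantitative vagueness of \emph{very small}: the observation is really a heuristic justification for the ascending-degree traversal choice used in {\tt inc-refinement}, not a sharp analytical bound. A fully formal proof would likely require auxiliary assumptions on the graph's degree distribution (e.g., a smooth or power-law tail) together with an averaging argument over the sub-communities formed during refinement. I would therefore present the final argument as an order-of-magnitude matching rather than a tight inequality, which suffices to motivate the downstream algorithmic design.
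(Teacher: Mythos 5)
There is no ``paper's own proof'' to compare against here: the statement is an \emph{Observation}, and the paper proves only Lemmas~2--4 in its appendix. Observation~3 is offered as a design heuristic --- it is what licenses discarding most cases of Lemmas~2--4 when $\alpha \ll m$ and motivates sorting $R$ by ascending degree in {\tt inc-refinement} --- and the paper's only support for it is the informal remark about the greedy merge rule plus the empirical evidence in Figures~13 and~15(b). Your claim~(i) reproduces that implicit reasoning correctly: the refinement step maximizes modularity gain, which by Equation~\eqref{equ:gain} is increasing in $w(v,U)$ and decreasing in $d(U)$, equivalently the $\gamma$-order condition $2m\,w(v,U) \geq \gamma\,d(v)\,d(U)$ is easiest to satisfy for exactly such $U$. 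That part is fine and is essentially all the justification the paper itself gives for the first sentence.

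Your claim~(ii), however, contains a step that would fail. The $\gamma$-order condition is a one-sided inequality, so it does not force $w(v,U)$ to ``scale like'' $\gamma\,d(v)\,d(U)/(2m)$; that quantity is only a lower bound, and for a typical sub-community it is far smaller than $d(v)$ (indeed $w(v,U)\leq d(v)$ always, while $\gamma\,d(v)\,d(U)/(2m)$ is tiny whenever $d(U)\ll m$). Worse, $d(U)=\sum_{u\in U}d(u)$ grows linearly in $|U|$, so your concession that $d(U)$ tracks $d(v)$ only ``up to the size of $U$'' gives away precisely the factor that would make the pairwise differences large for any non-trivial sub-community; the observation is plausible only because refinement tends to produce \emph{small} groups of comparable-degree vertices, which is an empirical property of the algorithm on real graphs rather than a consequence of the definitions. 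Your closing paragraph diagnoses this accurately: the statement is not provable without auxiliary distributional assumptions, and the honest route --- which is also the paper's --- is to argue that under $d(v)\approx w(v,U)\approx d(U)$ the thresholds in Lemmas~2--4 become $\Theta(m)$ or $\Theta(m/\gamma)$, and then validate the approximate equality empirically rather than analytically.
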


Therefore, $\alpha$ is unlikely to satisfy the conditions in cases (2)-(4) of Lemma \ref{lem:intra_sub_e_del}, all the cases of Lemma \ref{lem:cross_sub_e_del}, and the conditions in cases (1)-(3) of Lemma \ref{lem:sub_e_ins} when $\alpha \ll m$ (which is often true as stated in Assumption \ref{ass:size_diff}).
As a result, when designing the maintenance algorithm, we only need to consider the effect of intra-sub-community edge deletions on $v_i$, which cannot be ignored.

Besides, our experiments show the following observation, which shows that the case (1) of Lemma 2 can also be ignored.

\begin{observation}
\label{obs:conn_compoent}
    Given an updated graph with its previous sub-community memberships, for any sub-community $S$, we treat each connected component in $S$ as a new sub-community. 
    Most of the maintained communities are subpartition $\gamma$-dense.
\end{observation}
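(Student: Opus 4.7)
The plan is to justify the observation by separately attacking the two conditions in Definition \ref{def:sgd}: $\gamma$-connectivity of each resulting sub-community, and local optimality of every intermediate subsequence of its $\gamma$-order. First I would fix an old sub-community $S$ that admitted a $\gamma$-order $o_S$ in $G$, apply the update to obtain $G' = G \oplus \Delta G$, and partition $S$ into its connected components $S = S^1 \cup \cdots \cup S^t$ in the induced subgraph $G'[S]$. Each $S^j$ is then the candidate new sub-community whose subpartition $\gamma$-density I need to verify.

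For $\gamma$-connectivity, I would restrict $o_S$ to the vertices of $S^j$, producing a candidate order $o_{S^j}$, and argue that every merge $X \otimes Y$ that was feasible in $G$, i.e., satisfied $2m \cdot w(X,Y) \geq \gamma \cdot d(X) d(Y)$, remains feasible in $G'$ up to perturbations of magnitude $O(\alpha)$ in each of $m$, $w(X,Y)$, $d(X)$, $d(Y)$. By Assumption \ref{ass:size_diff} these perturbations are negligible relative to the slack in the original inequality for the overwhelming majority of merges. Any problematic merge either (i) depended on a deleted edge, in which case the connected-component split has already isolated the affected portion into a distinct $S^j$, or (ii) needs to be re-ordered because of an inserted edge, which by Lemma \ref{lem:sub_e_ins} requires $\alpha$ to cross a threshold of order $\Omega(m)$ and is therefore excluded by the assumption.

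For the local optimality condition $\Delta Q(X \to \emptyset, \gamma) \leq 0$, I would invoke Lemmas \ref{lem:intra_sub_e_del}, \ref{lem:cross_sub_e_del}, and \ref{lem:sub_e_ins} together with Observation \ref{obs:ord}. Those lemmas enumerate exactly the thresholds on $\alpha$ above which some vertex could cease to be locally optimal; Observation \ref{obs:ord} guarantees that the quantities $d(v)$, $w(v,U)$, $d(U)$ appearing in those thresholds are well-balanced across the merge order, so that every threshold scales as $\Omega(m)$ except case (1) of Lemma \ref{lem:intra_sub_e_del}. Assumption \ref{ass:size_diff} then suppresses all but that single case, and the connected-component split proposed by the observation is precisely what absorbs the remaining case by detaching the now-disconnected vertex from its former sub-community.

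The main obstacle will be the quantifier \emph{most}, which is statistical rather than deterministic and therefore cannot be discharged by the preceding lemmas alone: they only give worst-case sufficient conditions for a violation to be \emph{possible}, not a bound on the frequency with which violations actually occur on real graphs. I would therefore close the argument with an empirical validation across the dynamic-graph benchmarks of Section 6, reporting the measured fraction of connected components of old sub-communities that are subpartition $\gamma$-dense after the update, and attributing the high observed fraction to the combined effect of Assumption \ref{ass:size_diff} and Observation \ref{obs:ord}. This converts the deterministic worst-case thresholds of Lemmas \ref{lem:intra_sub_e_del}--\ref{lem:sub_e_ins} into the practical statement asserted by Observation \ref{obs:conn_compoent}, and simultaneously motivates why {\tt inc-refinement} can safely treat each connected component as a new sub-community.
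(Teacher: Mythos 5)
Your local-optimality half matches the paper's reasoning: Lemmas \ref{lem:intra_sub_e_del}--\ref{lem:sub_e_ins} plus Observation \ref{obs:ord} and Assumption \ref{ass:size_diff} suppress every case except case (1) of Lemma \ref{lem:intra_sub_e_del}, and you correctly note that the quantifier \emph{most} can only be discharged empirically, which is exactly how the paper treats it (the observation is stated as an experimental finding, backed by the $>99\%$ figures). The gap is in your $\gamma$-connectivity half. You propose to restrict the stored $\gamma$-order $o_S$ to each connected component and argue every merge survives up to $O(\alpha)$ perturbations, with the escape hatch that any merge ``depending on a deleted edge'' has already been isolated into a separate component. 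That escape hatch is false: an intra-sub-community edge deletion can entirely invalidate a merge of the stored order \emph{without} disconnecting the component. This is precisely Example \ref{eg:go} and Figure \ref{fig:triangle}(b): deleting $(v_1,v_2)$ from the triangle leaves it connected, so no split occurs, yet the stored order $\{v_3\}\otimes(\{v_1\}\otimes\{v_2\})$ is no longer a $\gamma$-order because its first merge has lost its entire edge weight --- no $O(\alpha)$ perturbation bound can rescue a merge whose supporting weight went to zero.

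The paper closes this hole with two arguments you do not use. First, $\gamma$-connectivity (Definition \ref{def:gcc}) only requires that \emph{some} $\gamma$-order exist, and a subgraph typically admits many distinct ones; in the triangle example $\{v_1\}\otimes(\{v_2\}\otimes\{v_3\})$ still certifies $\gamma$-connectivity even though the stored order fails. So the right claim is not that the restricted old order survives, but that an alternative order usually exists whenever the component stays connected. Second, a vertex that genuinely loses $\gamma$-connectivity is usually also a vertex that violates vertex optimality (e.g.\ $v_2$ in Figure \ref{fig:triangle}(c)), so it is typically relocated during {\tt inc-movement} before the refinement check ever examines it. You should replace your restriction-and-perturbation argument for connectivity with these two points; the rest of your plan then lines up with the paper.
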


The above observation holds because Leiden only offers us a $\gamma$-order from the refinement phase, and a subgraph often exists with multiple distinct $\gamma$-orders as shown in Example \ref{eg:go}.
Besides, if a vertex is a candidate affecting $\gamma$-connectivity, it is often a candidate affecting vertex optimality, e.g., the vertex $v_2$ in Fig. \ref{fig:triangle}\subref{fig:triangle:cur2}.
In this case, the vertex is likely to change its community before verifying whether the vertex needs to move out of its sub-community. 
Hence, the case (1) of Lemma \ref{lem:intra_sub_e_del} can be ignored if the intra-sub-community edge deletion does not cause the sub-community to be disconnected.

Based on Observations \ref{obs:ord}-\ref{obs:conn_compoent}, we develop a novel refinement algorithm, called {\tt inc-refinement}, in {\tt HIT-Leiden}, which will be introduced in Section \ref{subs:ticr}.

As shown in Fig. \ref{fig:gamma-desnity}, over $99\%$ maintained communities from {\tt HIT-Leiden} are subpartition $\gamma$-dense.

{\bf Extension to supergraphs.} Changes at the lower level propagate upward to superedge changes in the higher-level supergraph, as Leiden constructs a list of supergraphs in a bottom-up manner.
This motivates us to develop an incremental aggregation phase, namely {\tt inc-aggregation}, to compute the superedge changes in Section \ref{subs:da}.

\begin{example}
\label{eg:extend}
  In Fig. \ref{fig:mod}, communities $C_1$ and $C_2$ are treated as supervertices.
  Deleting an edge $(v_3, v_5, 1)$ and inserting an edge $(v_1, v_3, 1)$ cause $v_3$ and $v_4$ to move from $C_2$ to $C_1$.
  This results in the deletion of ${(C_2, C_2, -2)}$ and insertion of ${(C_1, C_1, 2)}$ in the supergraph.
\end{example}

Therefore, we treat each supergraph as a set of facing edge changes from the previous Leiden community and process them using a consistent procedure as shown in Fig. \ref{fig:inc_methods}\subref{fig:inc_methods:hit}.

\textbf{Characterization of \texttt{AFF}.} 
Based on these analyses, we define the supervertices that change their communities or sub-communities as the affected area {\tt AFF} of Leiden.

\section{Our HIT-Leiden algorithm}
\label{sec:hit}

In this section, we propose \texttt{HIT-Leiden}, which updates community memberships by restricting computation to the affected region, as motivated by the analysis in Section \ref{subsec:vectex_opt}.
We first introduce its three key components, namely {\tt inc-movement}, {\tt inc-refinement}, and {\tt inc-aggregation} of our {\tt HIT-Leiden}.
Fig. \ref{fig:theo2alg} shows the assumption, lemmas, and observations used in these components.
Then, we present an auxiliary procedure, called deferred update, abbreviated as {\tt def-update}.
Afterward, we give an overview of {\tt HIT-Leiden}, and finally analyze the boundedness of {\tt HIT-Leiden}.

\begin{figure}[t]
    \centering
    \includegraphics[width=\linewidth]{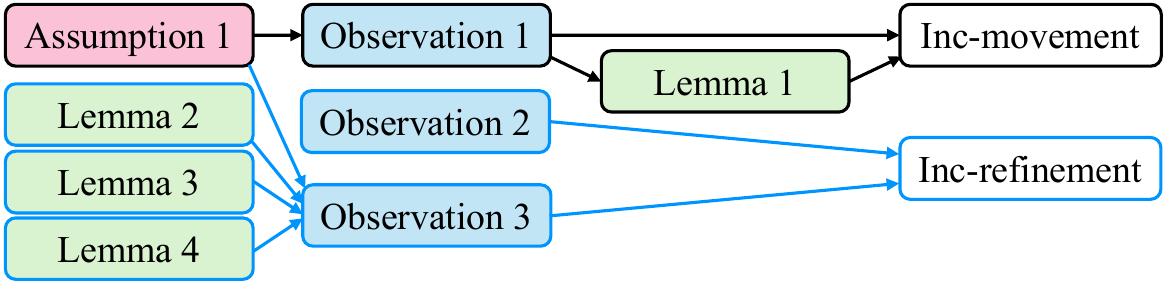}
    \setlength{\abovecaptionskip}{-17pt}
    \caption{The design rationale for \texttt{inc-movement} and \texttt{inc-refinement}.}
    \label{fig:theo2alg}
\end{figure}

\subsection{Inc-movement}
\label{subs:hifm}

The goal of {\tt inc-movement} is to preserve vertex optimality.
As analyzed in Section \ref{subsec:vectex_opt}, the endpoints of a deleted intra-community edge or an inserted cross-community edge may affect their community memberships.
If an affected vertex changes its community, its neighbors outside the target community may also be affected.
Note that any vertex that changes its community has to change its sub-community, since each sub-community is a subset of its community.
Hence, sub-community memberships are also considered in {\tt inc-movement}.

We first introduce the data structures used to maintain a dynamic sub-community.
According to Observation \ref{obs:conn_compoent}, each connected component of a sub-community is treated as a $\gamma$-connected subset.
When edge updates or vertex movements split a sub-community into multiple connected components, we re-assign each resulting component as a new sub-community.

\begin{example}
\label{eg:cc}
  Fig. \ref{fig:removal} shows the sub-community $S_1$ is split into two sub-communities $S_2 = \{v_1,v_3\}$ and $S_3=\{v_2\}$.
  The separation can occur either due to the deletion of edges $(v_1, v_2)$ and $(v_2, v_3)$ during graph updates, as shown in Fig.~\ref{fig:removal}\subref{fig:removal:edge}, or due to the removal of vertex $v_2$ during the movement phase, as shown in Fig.~\ref{fig:removal}\subref{fig:removal:vertex}.
\end{example}

\begin{figure}[t]
  \centering
  \subfloat[A sub-community.]{
    \begin{minipage}[t]{0.155\textwidth}
      \centering
      \includegraphics[width=0.7434\textwidth]{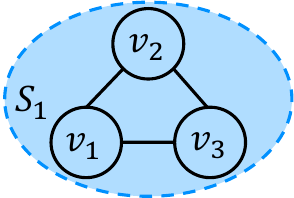}
    \end{minipage}
    \label{fig:removal:graph}
  }
  \subfloat[Delete two edges.]{
    \begin{minipage}[t]{0.155\textwidth}
      \centering
      \includegraphics[width=0.7434\textwidth]{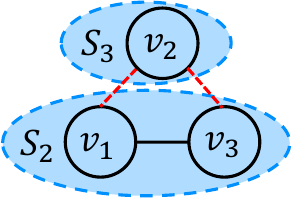}
    \end{minipage}
    \label{fig:removal:edge}
  }
  \subfloat[Move out a vertex.]{
    \begin{minipage}[t]{0.155\textwidth}
      \centering
      \includegraphics[width=0.7434\textwidth]{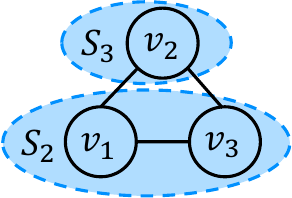}
    \end{minipage}
    \label{fig:removal:vertex}
  }
  \caption{Illustrating the process that a sub-community $S_1$ is split into two sub-communities $S_2$ and $S_3$.}
  \label{fig:removal}
\end{figure}

To preserve the structure under such changes, we leverage the dynamic connected component maintenance technique in \cite{xu2024constant}.
%
%
The graph $G_{\Psi}$ stores the subgraph of $G$ consisting only of intra-sub-community edges based on $s(\cdot)$.

%

\begin{algorithm} [h]
  \small
  \caption{\tt Inc-movement}
  \label{alg:hit:hifm}
  
  \KwIn{$G$, $\Delta G$, $f(\cdot)$, $s(\cdot)$, $\Psi$, $\gamma$}
  
  \KwOut{Updated $f(\cdot)$, $\Psi$, $B$, $K$}
  $A \leftarrow \emptyset$, $B \leftarrow \emptyset$, $K \leftarrow \emptyset$\;
  \For{$(v_i, v_j, \alpha) \in \Delta G$}{
    \If{$\alpha > 0$ and $f(v_i) \neq f(v_j)$}{
      $A.add(v_i)$; $A.add(v_j)$\;
    }
    \If{$\alpha < 0$ and $f(v_i) = f(v_j)$}{  
      $A.add(v_i)$; $A.add(v_j)$\;
    }
    \If{$s(v_i) = s(v_j)$ and $update\_edge\left(G_{\Psi}, (v_i, v_j, \alpha)\right)$}{
      $K.add(v_i)$; $K.add(v_j)$\;
    }
  }
  \While{$A \neq \emptyset$}{
    $v_i \leftarrow A.pop()$\;
    $C^* \leftarrow argmax_{C \in {\mathbb{C}\cup\{\emptyset\}}}{\Delta Q(v_i \to C, \gamma)}$\;
    \If{$\Delta Q(v_i \to C^*, \gamma) > 0$}{
      $f(v_i) \leftarrow C^*$; $B.add(v_i)$; $K.add(v_i)$\;
      \For{$v_j \in N(v_i)$}{
        \If{$f(v_j) \neq C^*$}{
            $A.add(v_j)$\;
        }
      }
      \For{$v_j \in \{ u \in N(v_i) | s(u) = s(v_i)\}$}{
        \If{$update\_edge\left(G_{\Psi}, (v_i, v_j, -w(v_i, v_j))\right)$}{
          $K.add(v_j)$\;
        }
      }
    }
  }
  \KwRet $f(\cdot)$, $\Psi$, $B$, $K$\;
\end{algorithm}

Algorithm \ref{alg:hit:hifm} shows {\tt inc-movement}.
Given an updated graph $G$, a set of graph changes $\Delta G$, community mappings $f(\cdot)$, sub-community mappings $s(\cdot)$, and a CC-index $\Psi$, it first initializes three empty sets: $A$, $B$ and $K$ (line 1).
Here, $A$ keeps the vertices whose community memberships may be changed, $B$ keeps the vertices that have changed their community memberships, and $K$ records the endpoints on edges whose deletion disconnects the connected component in $G_{\Psi}$.
Subsequently, vertices involved in intra-community edge deletion or cross-community edge insertion are added to $A$, and edges in $G_{\Psi}$ are updated according to intra-sub-community changes (lines 2-7) based on Observations \ref{lemma:intra-cross-edge} and \ref{obs:conn_compoent}, respectively.
If an edge update in $G_{\Psi}$ causes a connected component to split (i.e., $update\_edge(\cdot)$ returns $true$), its endpoints are added to $K$ (line 8). 
It then processes vertices in $A$ until the set is empty (line 9).
For each vertex $v_i$, it identifies the target community $C^*$ that yields the highest modularity gain (lines 10-11).
If $\Delta Q(v_i \to C^*) > 0$, $f(v_i)$ is updated to $C^*$, $v_i$ is added to $B$ and $K$, and the neighbors of $v_i$ not in $C^*$ are added to $A$ (lines 12-16), which implements the property in Lemma \ref{lemma:neighbor-change}.
Besides, the intra-sub-community edges involving $v_i$ are deleted from $G_{\Psi}$, and the vertices involved in component splits are added to $K$ (lines 17-19).
Finally, it returns $f(\cdot)$, $\Psi$, $B$,  and $K$ (line 20).

\subsection{Inc-refinement}
\label{subs:ticr}

As discussed in Section \ref{subs:hifm} and Observation \ref{obs:conn_compoent}, we treat each connected component in $G_{\Psi}$ maintained in {\tt inc-movement} as a sub-community.
Therefore, we design {\tt inc-refinement} for re-assigning each new connected component in $G_{\Psi}$ as a sub-community.
Additionally, we attempt to merge singleton sub-communities whose process is the same as the process of the refinement phase in {\tt Leiden} with $G_\Psi$ maintenance.

Algorithm \ref{alg:ticr} presents its pseudocode.
Given an updated graph $G$, community mappings $f(\cdot)$ and sub-community mapping $s(\cdot)$, a CC-index $\Psi$, and a set $K$, it first initializes $R$ as an empty list to track vertices that have changed their sub-communities (line 1).
It then traverses $K$ to identify split connected components in $G_{\Psi}$ using breadth-first search or depth-first search.
All vertices in the split connected component are re-mapped to a new sub-community and added to $R$ (lines 2-3).
For each vertex $v_i \in R$ that is in a singleton sub-community, {\tt inc-refinement} uses a set $\mathcal{T}$ to store the locally optimized neighboring sub-communities of $v_i$ within the same community (lines 4-6). 
If $\mathcal{T} \neq \emptyset$, it attempts to re-assign $v_i$ to a sub-community $S^* \in \mathcal{T}$, which offers the highest modularity gain to eliminate singleton sub-communities (lines 7-9).
Notably, $\Delta Q(v_i \to S, \gamma)$ denotes the modularity gain of moving $v_i$ from $s(v_i)$ to $S$, whose calculation follows the same formula as the standard modularity gain.
If the gain is positive, $s(v_i)$ is updated to $S^*$, and the corresponding intra-sub-community edges are added into $G_{\Psi}$ (lines 10-14).
Finally, {\tt inc-refinement} returns the $s(\cdot)$, $\Psi$, and $R$ (line 15).

\begin{algorithm} [t]
  \small
  \caption{\tt Inc-refinement}
  \label{alg:ticr}
  
  \KwIn{$G$, $f(\cdot)$, $s(\cdot)$, $\Psi$, $K$, $\gamma$}
  \KwOut{Updated $s(\cdot)$, $\Psi$, $R$}

  $R\leftarrow \emptyset$\;
  \For{$v_i \in K$}{
        Map all vertices in the connected component containing $v_i$ into a new sub-community and add them into $R$\;
  }
  \For{$v_i \in R$}{
    \If{$v_i$ is in a singleton sub-community}{
        $\mathcal{T} \leftarrow \{s(v) | v \in N(v_i), f(v)=f(v_i), \Delta Q(s(v) \to \emptyset,\gamma) \leq 0\}$\;
        \If{$\mathcal{T} = \emptyset$}{continue\;}
        $S^* \leftarrow argmax_{S \in \mathcal{T}}{\Delta Q(v_i \to S,\gamma)}$\;
        \If{$\Delta Q(v_i \to S^*,\gamma) > 0$}{
          $s(v_i) \leftarrow S^*$\;
          \For{$v_j \in N(v_i)$}{
            \If{$s(v_i) = s(v_j)$}{
              $update\_edge\left(G_{\Psi}, (v_i, v_j, w(v_i, v_j))\right)$\;
            }
          } 
        }
    }
  }
  \KwRet{$s(\cdot)$, $\Psi$, $R$}\;
\end{algorithm}

\subsection{Inc-aggregation}
\label{subs:da}

Given an updated graph $G$ and its edge changes $\Delta G$, modifications to edges and sub-community memberships are reflected as changes to superedges and supervertices in the supergraph $H$.
Let $s_{pre}(\cdot)$ (resp. $s_{cur}(\cdot)$) denote the vertex-to-supervertex mappings before (resp. after) {\tt inc-refinement}.
Any edge change $(v_i, v_j, \alpha)$ in $\Delta G$ corresponds to a superedge change $(s_{pre}(v_i), s_{pre}(v_j), \alpha)$ in $H$, since the weight of a superedge is the sum of the weights of edges between their sub-communities.
Besides, a vertex $v$ migration from $s_{\text{pre}}(v)$ to $s_{\text{cur}}(v)$ requires updating these weights.
Specifically, the original sub-community $s_{pre}(v)$ must decrease the superedge weights corresponding to the edge incident to $v$, and the new sub-community $s_{cur}(v)$ must increase them under the new assignment.
%

\begin{example}
    \label{eg:da}
    Following Example~\ref{eg:extend}, the initial superedge changes due to edge changes are $(C_1, C_2, 1)$ and $(C_2, C_2, -1)$.
    Then, vertices $v_3$ and $v_4$ move from $C_2$ to $C_1$, and there are three cases:
    \begin{enumerate}
        \item $C_1$ gains edges to the neighbors of $v_3$, resulting in two updates: $(C_1, C_1, 1)$ and $(C_1, C_1, 1)$;
        \item $C_2$ loses edges incident to the neighbors of $v_3$, resulting in $(C_1, C_2, -1)$ and $(C_2, C_2, -1)$;
        \item The effect of $v_4$ is skipped to avoid duplicate updates, since its only neighbor $v_3$ already changed.
    \end{enumerate}
    After compressing the above six superedge changes, we obtain the final superedge changes, which are $(C_1, C_1, 2)$ and $(C_2, C_2, -2)$.
\end{example}

\begin{algorithm} [t]
  \small
  \caption{\tt Inc-aggregation}
  \label{alg:da}
  
  \KwIn{$G$, $\Delta G$, 
  $s_{pre}(\cdot)$, $s_{cur}(\cdot)$, $R$}
  
  \KwOut{$\Delta H$, $s_{pre}(\cdot)$}
  
  $\Delta H \leftarrow \emptyset$\;
  \For{$(v_i, v_j, \alpha) \in \Delta G$}{
    $s_i \leftarrow s_{pre}(v_i)$, $s_j \leftarrow s_{pre}(v_j)$\;
    $\Delta H.add((s_i, s_j, \alpha))$\;
  }

  \For{$v_i \in R$}{
    \For{$v_j \in N(v_i)$}{
      \If{$s_{cur}(v_j) = s_{pre}(v_j)$ or $i < j$}{
          $\Delta H.add((s_{pre}(v_i), s_{pre}(v_j), -w(v_i, v_j)))$\;
          $\Delta H.add((s_{cur}(v_i), s_{cur}(v_j), w(v_i, v_j)))$\;
      }
    }
    $\Delta H.add((s_{pre}(v_i), s_{pre}(v_i), -w(v_i, v_i)))$\;
    $\Delta H.add((s_{cur}(v_i), s_{cur}(v_i), w(v_i, v_i)))$\;
  }
  
  \For{$v_i \in R$}{
    $s_{pre}(v_i) \leftarrow s_{cur}(v_i)$\;
  }
  $Compress(\Delta H)$\;
  \KwRet $\Delta H$, $s_{pre}(\cdot)$\;
\end{algorithm}

Algorithm \ref{alg:da} presents {\tt inc-aggregation}.
Initially, the set of changes $\Delta H$ of $H$ is empty (line 1).
Then, it maps the edge changes $\Delta G$ to superedge changes using $s_{pre}(\cdot)$ (lines 2-4).
Following, it updates superedges for vertices that switch sub-communities by removing edges from the old community and adding edges to the new one. 
For any vertex $v_i$ in $R$, it updates superedges with each neighbor $v_j$ if either $s_{cur}(v_j) = s_{pre}(v_j)$ or $i < j$ to avoid duplicate processing (lines 5-9). 
Besides, it updates the self-loop for the sub-community of $v_i$ (lines 10-11).
Finally, it updates $s_{pre}(\cdot)$ locally to match $s_{cur}(\cdot)$ at the next time step (lines 12-13) and compresses $\Delta H$ by summing the weights of identical superedges (line 14).


\subsection{Overall HIT-Leiden algorithm}


Before presenting our overall HIT-Leiden algorithm, we introduce an optimization technique to further improve the efficiency of the vertices' membership update.
Specifically, when a supervertex changes its community membership, all the lower-level supervertices associated with it have to update their community membership.
As shown in Fig.~\ref{fig:change}, when $v^2_{10}$ changes its community, $v^1_3$ and $v^1_4$ also update their community memberships to the community containing $v^2_{10}$.
However, during the iteration process of {\tt HIT-Leiden}, a supervertex that changes its community does not automatically trigger updates of the community memberships of its constituent lower-level supervertices.

\begin{figure}[t]
  \centering
  \subfloat[Before maintenance.]{
    \begin{minipage}[t]{0.227514285714286\textwidth}
      \centering
      \includegraphics[width=\textwidth]{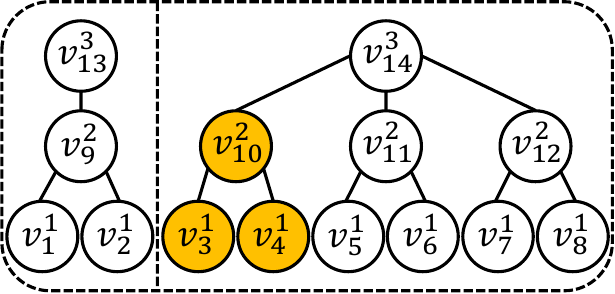}
    \end{minipage}
    \label{fig:change:change_pre}
  }
  \subfloat[After maintenance.]{
    \begin{minipage}[t]{0.241\textwidth}
      \centering
      \includegraphics[width=\textwidth]{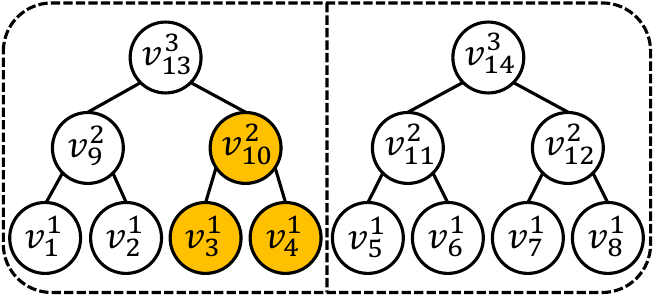}
    \end{minipage}
    \label{fig:change:change_cur}
  }
  \caption{Changes in the hierarchical partitions of Fig. \ref{fig:mod}.}
  \label{fig:change}
\end{figure}


\begin{figure*}[t]
  \centering
  \subfloat[Community maintenance by \texttt{HIT-Leiden}.]{
    \begin{minipage}[t]{0.409854545454545\textwidth}
      \centering
      \includegraphics[width=\textwidth]{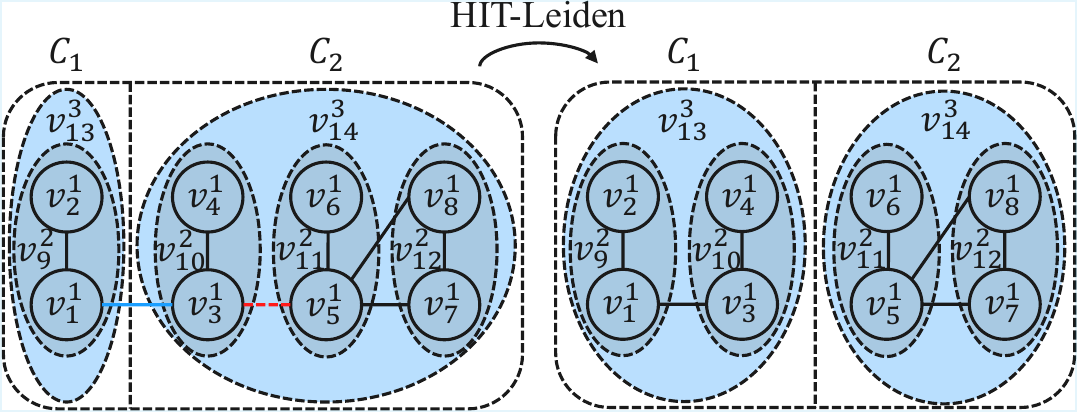}
    \end{minipage}
    \label{fig:ex_hit:ex_hit_graph}
  }
  \begin{tikzpicture}
    \draw[line width=0.65625pt] (0,0) -- (0,2.9);
  \end{tikzpicture}
  \subfloat[The process of \texttt{HIT-Leiden} in iteration two.]{
    \begin{minipage}[t]{0.5352\textwidth}
      \centering
      \includegraphics[width=\textwidth]{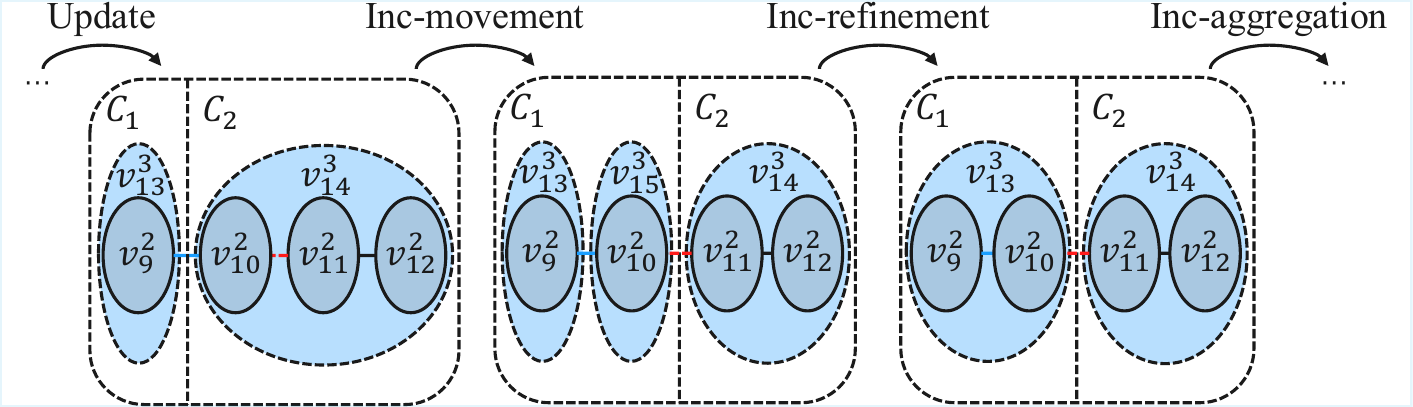}
    \end{minipage}
    \label{fig:ex_hit:ex_hit_l2}
  }
  \caption{An example of \texttt{HIT-Leiden}.}
  \label{fig:ex_hit}
\end{figure*}

\begin{algorithm}[t]
  \small
  \caption{\tt def-update}
  \label{alg:mem_update}
  
  \KwIn{$\{f^P(\cdot)\}$, $\{s^P(\cdot)\}$, $\{B^P\}$, $P$}
  
  \KwOut{Updated $\{f^P(\cdot)\}$}
  
  \For{$p$ from $P$ to $1$}{
    \If{$p \neq P$}{
      \For{$v^p_i \in B^p$}{
        $f^p(v^i_p) \leftarrow f^{p+1}(s^p(v^i_p))$\;
      }
    }
    \If{$p \neq 1$}{
      \For{$v^p_i \in B^p$}{
        $B^{p - 1}.add((s^{p-1})^{-1}(v^p_i))$\;
      }
    } 
  }
  \KwRet $\{f^P(\cdot)\}$\;
\end{algorithm}

To resolve the above inconsistency, we perform a post-processing step to synchronize the community memberships across all levels, as described in Algorithm \ref{alg:mem_update}.
Let $\{B^P\}$ denote a sequence of $P$ sets $\{B^1, \cdots, B^P\}$, $\{s^P(\cdot)\}$ denote a sequence of $P$ adjacent-level supervertex mappings $\{s^1(\cdot), \cdots ,s^P(\cdot)\}$, and $\{f^P(\cdot)\}$ denote a sequence of $P$ community mappings $\{f^1(\cdot), \cdots ,f^P(\cdot)\}$.
Note, each $B^p$ in $\{B^P\}$ collects supervertices at level-$p$ whose community memberships have changed, each $s^p(\cdot)$ in $\{s^P(\cdot)\}$ maps from level-$p$ supervertices to their parent supervertices at level-($p+1$), and each $f^p(\cdot)$ in $\{f^P(\cdot)\}$ maps from level-$p$ supervertices to their communities.
A supervertex is added to $B^p$ for one of two reasons: (1) it changes its community during {\tt inc-movement}, or (2) its higher-level supervertex changes community.
Hence, for each level $p$, {\tt def-update} updates each supervertex in $B^p$ by re-mapping its community membership of its parent using $s^p(\cdot)$ and $f^{p+1}(\cdot)$ when $p \neq P$ (lines 1-4), and adds its constituent vertices to $B^{p-1}$ for the next level updates where $(s^{p-1})^{-1}(\cdot)$ denotes the inverse mapping of $s^{p-1}(\cdot)$ (lines 5-7). 
This algorithm also supports updating the ancestor mappings $\{g^P(\cdot)\}$ from each level supervertex to its level-$P$ ancestor.
%

$\bullet$ \textbf{Overall HIT-Leiden.} 
After introducing all the key components, we present our overall {\tt HIT-Leiden} in 
Algorithm \ref{alg:hit}.
The algorithm proceeds over $P$ hierarchical levels, where each level-$p$ operates on a corresponding supergraph $G^p$.
Besides the community membership $f(\cdot)$, {\tt HIT-Leiden} also maintains supergraphs $\{G^P\}$, community mapping $\{f^P(\cdot)\}$, ancestor mappings $\{g^P(\cdot)\}$, sub-community mappings $\{s^P_{pre}(\cdot)\}$ and $\{s^P_{cur}(\cdot)\}$, and CC-indices $\{\Psi^P\}$ to maintain sub-community memberships for each level.
Note, $\{s^P_{pre}(\cdot)\}$ are the mappings from the previous time step, and $\{s^P_{cur}(\cdot)\}$ are the in-time mappings to track sub-community memberships as they evolve at the current time step.
Specifically, it initializes $\{s^P_{cur}(\cdot)\} = \{s^P_{pre}(\cdot)\}$.
Given the graph change $\Delta G$, it first initializes the first-level update $\Delta G$ to $\Delta G^1$ (line 1).
It then proceeds through $P$ iterations, each including three phases after updating the supergraph $G^p$ (line 3).
\begin{enumerate}
    \item {\tt Inc-movement} (line 4): it re-assigns community memberships of affected vertices to achieve vertex optimality, which yields $f^p(\cdot)$, $\Psi$, $B^p$, and $K$.
    \item {\tt Inc-refinement} (line 5): it re-maps the supervertices of split connected components in $\Psi$ to new sub-communities, producing $s^p_{cur}(\cdot)$, $\Psi$, and $R^p$.
    \item {\tt Inc-aggregation} (line 7): it calculates the next level's superedge changes $\Delta G^{p+1}$, and synchronizes $s^p_{pre}(\cdot)$ to match $s^p_{cur}(\cdot)$.
\end{enumerate}

After $P$ iterations, {\tt def-update} (Algorithm~\ref{alg:mem_update}) synchronizes community mappings $\{f^P(\cdot)\}$ and ancestor mappings $\{g^P(\cdot)\}$ across levels (lines 8-9). 
The final output $f(\cdot)$ is set to $g^1(\cdot)$ (line 10), since $f^1(\cdot)$ could contain disconnected communities with insufficient iterations and $g^1(\cdot) = f^1(\cdot)$ with sufficient iterations.
Finally, we return $\{G^P\}$, $\{f^P(\cdot)\}$, $\{g^P(\cdot)\}$, $\{s^P_{pre}(\cdot)\}$, $\{s^P_{cur}(\cdot)\}$, and $\{\Psi^P\}$ for the next graph evolution (line 11).
%

\begin{algorithm} [t]
  \small
  \caption{{\tt HIT-Leiden}}
  \label{alg:hit}
  \KwIn{$\{G^P\}$, $\Delta G$, $\{f^P(\cdot)\}$, $\{g^P(\cdot)\}$,$\{s^P_{pre}(\cdot)\}$, $\{s^P_{cur}(\cdot)\}$, $\{\Psi^P\}$, $P$, $\gamma$}
  
  \KwOut{$f(\cdot)$, $\{G^P\}$, $\{f^P(\cdot)\}$, $\{g^P(\cdot)\}$, $\{s^P_{pre}(\cdot)\}$, $\{s^P_{cur}(\cdot)\}$, $\{\Psi^P\}$}
  
  $\Delta G^1 \leftarrow \Delta G$\;
  \For{$p=1$ to $P$}{
    $G^p \leftarrow G^p \oplus \Delta G^p$\;

    $f^p(\cdot), \Psi^p, B^p, K \leftarrow \text{\tt inc-movement}(G^p, \Delta G^p, f^p(\cdot), s_{cur}^p(\cdot), \Psi^p, \gamma)$\;
    
    $s^p_{cur}(\cdot), \Psi^p, R^p \leftarrow \text{\tt inc-refinement}(G^p, f^p(\cdot), s^p_{cur}(\cdot), \Psi^p, K, \gamma)$\;
    \If{$p < P$}{
        $\Delta G^{p+1}, s^p_{pre}(\cdot) \leftarrow \text{\tt inc-aggregation}(G^p, \Delta G^p, s^p_{pre}(\cdot), s^p_{cur}(\cdot), R^p)$\;
    }
  }
  $\{f^P(\cdot)\} \leftarrow \text{\tt def-update}(\{f^P(\cdot)\}, \{s^P_{cur}(\cdot)\}, \{B^P\}, P)$\;
  $\{g^P(\cdot)\} \leftarrow \text{\tt def-update}(\{g^P(\cdot)\}, \{s^P_{cur}(\cdot)\}, \{R^P\}, P)$\;
  $f(\cdot) \leftarrow g^1(\cdot)$\;
  \KwRet $f(\cdot)$, $\{G^P\}$, $\{f^P(\cdot)\}$, $\{g^P(\cdot)\}$, $\{s^P_{pre}(\cdot)\}$, $\{s^P_{cur}(\cdot)\}$, $\{\Psi^P\}$\;
\end{algorithm}

\begin{example}
  \label{eg:l2process}
  Consider the result in Fig. \ref{fig:hie_structural}. 
  The graph undergoes an edge deletion $(v^1_3, v^1_5, -1)$ and an edge insertion $(v^1_1, v^1_3, 1)$.
  Resulting community and sub-community changes are shown in Fig.~\ref{fig:ex_hit}, with hierarchical changes in Fig.~\ref{fig:change}.
  Take the second iteration as an example. 
  In {\tt inc-movement}, the supervertex $v^2_{10}$ is reassigned to $v^3_{15}$ due to disconnection, and migrates from community $C_2$ to $C_1$.
  In {\tt inc-refinement}, $v^2_{10}$ is merged into $v^3_{13}$.
  Then, {\tt inc-aggregation} calculates superedge changes for level-$3$, including edge insertion $(v^3_{13}, v^3_{13}, 2)$ and edge deletions $(v^3_{14}, v^3_{14}, -2)$.
\end{example}


$\bullet$ \textbf{Complexity analysis.}
We now analyze the time and space complexity of {\tt HIT-Leiden} over $P$ iterations.
Let $\Gamma^p$ denote the set of supervertices involved in superedge changes, let $\Lambda^p$ track the supervertices that change their communities or sub-communities at level-$p$, and let $h^p$ denote the average vertex depth in the spanning tree in \cite{xu2024constant} at level-$p$.
Typically, $h^p$ is treated as a small factor.
For each level-$p$, {\tt inc-movement} and {\tt inc-refinement} complete in $O\left(|N_2(\Gamma^p)|+h^p\cdot|\Gamma^p| + |N_2(\Lambda^p)|+h^p\cdot|\Lambda^p|\right)$, and {\tt inc-aggregation} completes in $O(|N(\Gamma^p)| + |N(\Lambda^p)|)$.
Besides, the time cost of {\tt def-update} is $O\left(\sum_{p=1}^P |\Lambda^p|\right)$.
Hence, the total time cost of {\tt HIT-Leiden} can be expressed as $O(|N_2(\texttt{CHANGED})| + |N_2(\texttt{AFF})|)$, as analyzed in Section \ref{subsec:vectex_opt}.
As a result, our {\tt HIT-Leiden} is relatively bounded to Leiden.
In the worst case, \texttt{HIT-Leiden} degenerates to Leiden with a full rebuild of the CC-index at each level.
The total time is upper-bounded by $O\left(\sum_{p=1}^{P}\left(|V^p| + (1+h^p) \cdot |E^p|\right)\right)$.
For space complexity, at each level-$p$, {\tt HIT-Leiden} stores the aggregated graph, community/sub-community assignments, auxiliary update sets, and the CC-index.
Since the CC-index requires linear space with respect to the number of vertices and edges, the space cost at level-$p$ is $O(|V^p|+|E^p|)$.
%
Thus, the total space cost is at most
$O\left(\sum_{p=1}^{P}\left(|V^p| + |E^p|\right)\right)$. 

\section{Experiments}
\label{sec:exp}

We now present our experimental results.
Section \ref{subc:setup} introduces the experimental setup. 
Sections \ref{subs:exp:modularity} and \ref{subs:exp:time} evaluate the effectiveness and efficiency of {\tt HIT-Leiden}, respectively.

\input{sections/experiments/6.2-figure.tex}
\input{sections/experiments/6.3-figure.tex}
\subsection{Setup}
\label{subc:setup}

{\bf Datasets.} 
We use four real-world dynamic datasets, including 
\emph{dblp-coauthor}\footnotemark[2] (academic collaboration), 
\emph{yahoo-song}\footnotemark[2] (user-song interactions), 
\emph{sx-stackoverflow}\footnotemark[3] (developer Q\&A), and 
\emph{risk} (financial transactions) provided by \Acom.
All these dynamic edges are associated with real timestamps.
We also use one static dataset \emph{it-2004}\footnotemark[4] (a large-scale web graph), but randomly insert or delete some edges to simulate a dynamic graph.
All the graphs are treated as undirected graphs.
For each real-world dynamic graph, we collect a sequence of batch updates by sorting the edges in ascending order of their timestamps; for \emph{it-2004}, which lacks timestamps, we randomly shuffle its edge order.
Table~\ref{tab:datasets} summarizes the key statistics of the above datasets.

\begin{table}[t]
  \centering
  \setlength{\belowcaptionskip}{0pt}
  \setlength{\abovecaptionskip}{0pt}
  \caption{Datasets used in our experiments.}
  \label{tab:datasets}
  \begin{tabular}{ c | c | c | c | c }
  \hline
  \textbf{Dataset} & \textbf{Abbr.} & $|V|$ & $|E|$ & \textbf{Timestamp}  \\ 
    \hline
    \hline
    dblp-coauthor & {\tt DC} & 1.8M & 29.4M & Yes \\
    \hline
    yahoo-song & {\tt YS} & 1.6M & 256.8M & Yes \\
    \hline
    sx-stackoverflow & {\tt SS} & 2.6M & 63.4M & Yes \\
    \hline
    it-2004 & {\tt IT} & 41.2M & 1.0B & No \\
    \hline
    risk & {\tt RS} & 201.0M & 4.0B & Yes \\
    \hline
  \end{tabular}
  
\end{table}

\footnotetext[2]{http://konect.cc/networks/}
\footnotetext[3]{https://snap.stanford.edu/data/}
\footnotetext[4]{https://networkrepository.com/}

{\bf Algorithms.}
We test the following maintenance algorithms:
\begin{itemize}
  \item {\tt ST-Leiden}: A naive baseline that executes the static Leiden algorithm from scratch when the graph changes.
  
  \item {\tt ND-Leiden}: A simple maintenance algorithm in~\cite{sahu2024starting}, which processes all vertices during the movement phase, initialized with previous community memberships.
  
  \item {\tt DS-Leiden}: A maintenance algorithm based on~\cite{sahu2024starting}, which uses the delta-screening technique~\cite{zarayeneh2021delta} to restrict the number of vertices considered in the movement phase.
  
  \item {\tt DF-Leiden}: An advanced maintenance algorithm from~\cite{sahu2024starting}, which adopts the dynamic frontier approach~\cite{sahu2024df} to support localized updates.
  
  \item {\tt HIT-Leiden}: Our proposed method.
\end{itemize}

{\bf Dynamic graph settings.}
As the temporal span varies across datasets (e.g., 62 years for \emph{dblp-coauthor} versus 8 years for \emph{sx-stackoverflow}), we apply a sliding edge window, avoiding reliance on fixed valid time intervals that are hard to standardize.
Initially, we construct a static graph using the first 80\% of edges.
Then, we select a window size $b \in \{10, 10^2, 10^3, 10^4, 10^5\}$, denoting the number of updated edges in an updated batch.
Next, we slide this window $r=9$ times, so we update 9 batches of edges for each dataset.
In the figures, batch $0$ denotes the initial state before the first evaluated update.
Note that by default, we set $b=10^3$.

All the algorithms are implemented in C++ and compiled with the gcc 8.3.0 compiler using the same optimization level.
We set $\gamma = 1$ and use $P = 10$ iterations.
Before running the Leiden community maintenance algorithms, we obtain the communities by running the Leiden algorithm, and {\tt HIT-Leiden} requires an additional procedure to build auxiliary structures.
Due to the limited number of iterations, the community structure has not fully converged, so the maintenance algorithms usually take more time in the first two batches than in other batches.
Therefore, we exclude the first two batches from efficiency evaluations.
All experiments in this paper are conducted only on a Linux server running Debian 5.4.56, equipped with an Intel(R) Xeon(R) Platinum 8336C CPU @ 2.30GHz and 2.0 TB of RAM.

\subsection{Effectiveness evaluation}
\label{subs:exp:modularity}

To evaluate the effectiveness of different maintenance algorithms, we compare the modularity value and proportion of subpartition $\gamma$-dense communities for their returned communities.
%

$\bullet$ {\bf Modularity.}
Fig. \ref{fig:exp_shift_modularity} depicts the average modularity values of all the maintenance algorithms, where the batch size ranges from 10 to $10^5$.
Fig. \ref{fig:exp_modularity} depicts the modularity value across all 9 update batches, where the batch size is fixed as 1,000.
Across all datasets, the expected fluctuation in modularity for {\tt ST-Leiden} is around $0.02$ due to its inherent randomness.
These maintenance algorithms achieve equivalent quality in modularity, since the difference in their modularity values is within $0.01$.
Overall, our {\tt HIT-Leiden} achieves comparable modularity with other methods.


$\bullet$ {\bf Proportion of subpartition $\gamma$-density.} 
After running {\tt HIT-Leiden}, for each returned community, we try to re-find its $\gamma$-order such that any intermediate vertex set in the $\gamma$-order is locally optimized, according to Definition \ref{def:sgd}.
If we can find a valid $\gamma$-order for a community, we classify it as a subpartition $\gamma$-dense community.
We report the proportion of subpartition $\gamma$-dense communities in Fig. \ref{fig:gamma-desnity}.
The proportions of subpartition $\gamma$-dense communities are consistently high across all Leiden-based algorithms. 
For each dataset, the differences among these algorithms are very small, with fluctuations around 0.0001.
Thus, {\tt HIT-Leiden} achieves a comparable percentage of subpartition $\gamma$-density with others.

\subsection{Efficiency evaluation}
\label{subs:exp:time}

In this section, we first present the overall efficiency results, then analyze the time cost of each component, and finally evaluate the effects of some hyperparameters.


$\bullet$ {\bf Overall results.} 
Fig. \ref{fig:exp_overal_time} presents the overall efficiency results where $b$ is set to its default value $1,000$.
Clearly, {\tt HIT-Leiden} achieves the best efficiency on datasets, especially on the {\tt it-2004} dataset, since it is up to three orders of magnitude faster than the state-of-the-art algorithms.
That is mainly because the ratio of updated edges to total edges in {\tt it-2004} is smaller than those in {\tt dblp-coauthor}, {\tt yahoo-song}, and {\tt sx-stackoverflow}. 

$\bullet$ \textbf{Time cost of different components in \texttt{ HIT-Leiden}.} 
There are three key components, i.e., {\tt inc-movement}, {\tt inc-refinement}, and {\tt inc-aggregation}, in {\tt HIT-Leiden}.
We evaluate the proportion of time cost for each component and present the results in Fig. \ref{fig:exp_proportion}.
Note that some operations (e.g., {\tt def-update} in {\tt HIT-Leiden}) may not be included by the above three components, so we put them into the "Others" component.
Notably, in {\tt HIT-Leiden}, the refinement phase contributes minimally to the overall runtime. 
Besides, the combined proportion of time spent in its movement and aggregation phases is comparable to that of other algorithms.
{\tt Inc-movement}, {\tt inc-refinement}, and {\tt inc-aggregation} in {\tt HIT-Leiden} consistently outperform their counterparts in other algorithms across all datasets, achieving lower runtime costs according to Fig. \ref{fig:exp_overal_time}.

$\bullet$ {\bf Effect of $b$.}
We vary the batch size $b\in \{10, 10^2, 10^3, 10^4$, $10^5\}$ and report the efficiency in Fig. \ref{fig:exp_shift_time}.
We see that {\tt HIT-Leiden} is up to five orders of magnitude faster than other algorithms.
Also, its runtime decreases substantially as $b$ becomes smaller since it is a relatively bounded algorithm.
In contrast, {\tt ND-Leiden}, {\tt DS-Leiden}, and {\tt DF-Leiden} still need to process the entire graph when processing a new batch.

$\bullet$ {\bf Effect of $r$.}
Recall that after fixing the batch size $b$, we update the graph for $r$ batches.
Fig. \ref{fig:exp_time} shows the efficiency, where $b$ is fixed as 1,000, but $r$ ranges from 1 to 9.
We observe that the incremental speedup is limited in the first few batches because $P=10$ is small, and additional iterations may slightly improve the community membership.
As a result, all the maintenance algorithms often require more time for the second batch to adjust the community structure.
Once high-quality community structure is established, the speedup becomes significant.
In addition, {\tt HIT-Leiden} incurs a slightly higher runtime at these first batches due to recording more information and constructing the CC-index.


\section{Conclusions}
\label{sec:con}

In this paper, we develop an efficient algorithm for maintaining Leiden communities in a dynamic graph.
We first theoretically analyze the boundedness of existing algorithms and how supervertex behaviors affect community membership under graph update.
Building on these analyses, we further develop a relatively bounded algorithm, called {\tt HIT-Leiden}, which consists of three key components, i.e., {\tt inc-movement}, {\tt inc-refinement}, and {\tt inc-aggregation}.
Extensive experiments on five real-world dynamic graphs show that {\tt HIT-Leiden} not only preserves the properties of Leiden and achieves comparable modularity quality with Leiden, but also runs faster than state-of-the-art competitors. 
In future work, we will extend our algorithm to handle directed graphs and also evaluate it in a parallel environment.
%


\bibliographystyle{IEEEtran}
\bibliography{sample}

\appendices
\label{sec:appendix}

\section{Proof of lemmas}
\label{sec:append:proof}

\subsection{Proof of Lemma \ref{lemma:neighbor-change}}
\begin{proof}
Assuming $v$ changes its community from $C_i$ to $C$, there are three cases:
\begin{enumerate}
    \item For each neighbor $v_{i}$ in $C_i$, the edge $(v, v_i)$ is a {\it deleted intra-community} edge and an inserted cross-community edge; 
    \item For each neighbor $v_j$ in $C$, the edge $(v, v_j)$ is a deleted cross-community edge and an inserted intra-community edge; 
    \item For each other neighbor $v_k$, edge $(v, v_k)$ is a deleted cross-community edge and an {\it inserted cross-community} edge.
\end{enumerate}
Since only the first and third cases meet the conditions in Observation \ref{lemma:intra-cross-edge}, all the neighbors of $v$ that are not in $C$ are likely to change their communities. 
\end{proof}

\subsection{Proof of Lemma \ref{lem:intra_sub_e_del}}
\begin{proof}
\label{proof:intra_sub_e_del}
  We analyze the modularity gain $\Delta M(v \to \emptyset, \gamma)$ for any vertex $v$, which denotes the modularity gain of moving $v$ from the intermediate subsequence $I$ to $\emptyset$, whose calculation follows the same formula as the standard modularity gain.
  
  According to Definition \ref{def:gcc}, if $\Delta M(v \to \emptyset, \gamma) > 0$, the intermediate subsequence $I$ could not be $\gamma$-connected and $v$ has to leave $I$.
  It is different from maintaining vertex optimality (mentioned in Definition \ref{def:vertex-optimality}): If there exists a community $C'$ such that the modularity gain of moving $v$ from its community $C$ to $C'$ is positive, $v$ is not locally optimized and has to be removed from $C$. 

  \underbar{\textbf{Case 1: $v_i$ is inserted into $S$ after $v_j$,}}
  i.e., $v_j \in I_i$. 
  The old modularity gain $M_{old}(v_i \to \emptyset, \gamma) < 0$ before deletion is: 
  \begin{equation}
    \begin{aligned}
      M_{old}(v_i \to \emptyset, \gamma)&=-\frac{w(v_i, U_i)}{m} + \frac{\gamma \cdot d(v_i) \cdot d(U_i)}{2m^2} \leq 0.
    \end{aligned}
    \label{equ:old_mg_intra_sub_vi}
  \end{equation}
  
  Where $U_i = I_i \setminus \{v_i\}$.
  We multiply right side of \eqref{equ:old_mg_intra_sub_vi} by $2m^2$ and obtain $X_{(\text{\ref{equ:old_mg_intra_sub_vi}})}$:
  \begin{equation}
    X_{(\text{\ref{equ:old_mg_intra_sub_vi}})} = -2m \cdot w(v_i, U_i) + \gamma \cdot d(v_i) \cdot d(U_i) \leq 0
    \label{equ:old_mg_intra_sub_vi_numerator}
  \end{equation}
  
  After the deletion, the new modularity gain $M_{new}(v_i \to \emptyset, \gamma)$ formulates:
  \begin{equation}
    \begin{aligned}
      \Delta M_{new}(v_i \to \emptyset, \gamma) &= - \frac{w(v_i, U_i) - 2 \alpha}{m - \alpha} \\
      &+ \frac{\gamma \cdot (d(v_i) - \alpha) \cdot (d(U_i) - \alpha)}{2(m - \alpha)^2}.
    \end{aligned}
    \label{equ:mg_intra_sub_vi}
  \end{equation}
  
  We multiply right side of \eqref{equ:mg_intra_sub_vi} by $2(m-\alpha)^2$ and obtain $Y_{(\text{\ref{equ:mg_intra_sub_vi}})}$:
  
  \begin{equation}
    \begin{aligned}
      Y_{(\text{\ref{equ:mg_intra_sub_vi}})} &= - 2(m - \alpha) \cdot (w(v_i, U_i) - 2 \alpha) \\
      &+ \gamma \cdot (d(v_i) - \alpha) \cdot (d(U_i) - \alpha) \\
      &= X_{(\text{\ref{equ:old_mg_intra_sub_vi}})} + \alpha\cdot(4m + 2w(v_i, U_i) - 4a - \gamma\cdot(d(I_i) - \alpha)) \\
      &< X_{(\text{\ref{equ:old_mg_intra_sub_vi}})} + \alpha\cdot(4m + 2w(v_i, U_i)) \\
    \end{aligned}
    \label{equ:new_mg_intra_sub_vi_numerator}
  \end{equation}

  If $X_{(\text{\ref{equ:old_mg_intra_sub_vi}})} + \alpha\cdot(4m + 2w(v_i, U_i)) > 0$, $\Delta M_{new}(v_i \to \emptyset, \gamma)$ could be positive; Otherwise, $\Delta M_{new}(v_i \to \emptyset, \gamma)$ must be non-positive.
  Therefore, $v_i$ could be removed from its sub-community only if $\alpha > \frac{2m \cdot w(v_i, U_i) - \gamma \cdot d(v_i) \cdot d(U_i)}{4m + 2w(v_i, U_i)}$.
  
  \underbar{\textbf{Case 2: $v_j$ is inserted into $S$ before $v_i$.}}
  In this case, we have $v_j \in I_j$, $v_i \notin I_j$, and the edge deletion does not affect intra-edges within $U_j$. The old modularity gain $M_{old}(v_i \to \emptyset, \gamma) < 0$ before deletion is: 
  \begin{equation}
    \begin{aligned}
      M_{old}(v_j \to \emptyset, \gamma)&=-\frac{w(v_j, U_j)}{m} + \frac{\gamma \cdot d(v_j) \cdot d(U_j)}{2m^2}.
    \end{aligned}
    \label{equ:old_mg_intra_sub_vj}
  \end{equation}
  
  We multiply right side of \eqref{equ:old_mg_intra_sub_vj} by $2m^2$ and obtain $X_{(\text{\ref{equ:old_mg_intra_sub_vj}})}$:
  \begin{equation}
    X_{(\text{\ref{equ:old_mg_intra_sub_vj}})} = -2m \cdot w(v_j, U_j) + \gamma \cdot d(v_j) \cdot d(U_j) < 0
    \label{equ:old_mg_intra_sub_vj_numerator}
  \end{equation}
  
  The new modularity gain after the edge deletion becomes:

  \begin{equation}
    \begin{aligned}
      \Delta M_{new}(v_j \to \emptyset, \gamma) &= -\frac{w(v_j, U_j)}{m - \alpha} \\
      &+ \frac{\gamma \cdot (d(v_j) - \alpha) \cdot d(U_j)}{2(m - \alpha)^2}\\
    \end{aligned}
    \label{equ:mg_intra_sub_vi_former}
  \end{equation}
    
  We multiply right side of \eqref{equ:mg_intra_sub_vi_former} by $2(m-\alpha)^2$ and obtain $Y_{(\text{\ref{equ:mg_intra_sub_vi_former}})}$:
  
  \begin{equation}
    \begin{aligned}
      Y_{(\text{\ref{equ:mg_intra_sub_vi_former}})} &= - 2(m - \alpha) \cdot w(v_j, U_j) + \gamma \cdot (d(v_j) - \alpha) \cdot d(U_j) \\
      & = X_{(\text{\ref{equ:old_mg_intra_sub_vj}})} +2\alpha \cdot \big(w(v_j, U_j) - \gamma \cdot d(U_j)\big) \\
      & < X_{(\text{\ref{equ:old_mg_intra_sub_vj}})} +2\alpha \cdot w(v_j, U_j)
    \end{aligned}
    \label{equ:new_mg_intra_sub_vi_numerator}
  \end{equation}
  
  Hence, $v_j$ could be removed from its sub-community only if $\alpha >  m - \frac{\gamma \cdot d(v_j) \cdot d(U_j)}{2w(v_j,U_j)}$.
  
  \underbar{\textbf{Generalization to other vertices.}}
  Consider other vertices $v_k$ and $v_l$ such that $v_k \in S_i$, $k \neq i, j$ and $v_l \notin S_i$.
  The old modularity gains $M_{old}(v_k \to \emptyset, \gamma) < 0$ and $M_{old}(v_l \to \emptyset, \gamma) < 0$ before deletion are: 
  \begin{equation}
    \begin{aligned}
      M_{old}(v_k \to \emptyset, \gamma) &= -\frac{w(v_k, U_k)}{m} + \frac{\gamma \cdot d(v_k) \cdot d(U_k)}{2m^2}.
    \end{aligned}
    \label{equ:old_mg_intra_sub_vk}
  \end{equation}

  \begin{equation}
    \begin{aligned}
      M_{old}(v_l \to \emptyset, \gamma) &= -\frac{w(v_l, U_l)}{m} + \frac{\gamma \cdot d(v_l) \cdot d(U_l)}{2m^2}.
    \end{aligned}
    \label{equ:old_mg_intra_sub_vl}
  \end{equation}
  
  We multiply right side of \eqref{equ:old_mg_intra_sub_vk} and (\ref{equ:old_mg_intra_sub_vl}) by $2m^2$ respectively to obtain $X_{(\text{\ref{equ:old_mg_intra_sub_vk}})}$ and $X_{(\text{\ref{equ:old_mg_intra_sub_vl}})}$:
  
  \begin{equation}
    X_{(\text{\ref{equ:old_mg_intra_sub_vk}})} = -2m \cdot w(v_k, U_k) + \gamma \cdot d(v_k) \cdot d(U_k) \leq 0
    \label{equ:old_mg_intra_sub_vk_numerator}
  \end{equation}

  \begin{equation}
    X_{(\text{\ref{equ:old_mg_intra_sub_vl}})} = -2m \cdot w(v_l, U_l) + \gamma \cdot d(v_l) \cdot d(U_l) \leq 0
    \label{equ:old_mg_intra_sub_vl_numerator}
  \end{equation}

  After the edge deletion, their new modularity gains are satisfied:

  \begin{equation}
    \begin{aligned}
      \Delta M_{new}(v_{k} \to \emptyset, \gamma) &\leq - \frac{w(v_k, U_k)}{m - \alpha} + \frac{\gamma \cdot d(v_k) \cdot d(U_k)}{2(m - \alpha)^2}. \\
    \end{aligned}
    \label{equ:new_mg_intra_sub_vk}
  \end{equation}
  

  \begin{equation}
    \begin{aligned}
      \Delta M_{new}(v_{l} \to \emptyset, \gamma) &= - \frac{w(v_l, U_l)}{m - \alpha} + \frac{\gamma \cdot d(v_l) \cdot d(U_l) }{2(m - \alpha)^2}. \\
    \end{aligned}
    \label{equ:new_mg_intra_sub_vl}
  \end{equation}


  $v_k$ could be merged before $v_i$ and $v_j$, between $v_i$ and $v_j$, as well as after $v_i$ and $v_j$. 
  $\Delta M_{new}(v_{k} \to \emptyset, \gamma)$ can be formulated as follows:
  
  (1) $v_k$ is merged before $v_i$ and $v_j$:
      \begin{equation}
          \Delta M_{new}(v_{k} \to \emptyset, \gamma) = - \frac{w(v_k, U_k)}{m - \alpha} + \frac{\gamma \cdot d(v_k) \cdot d(U_k)}{2(m - \alpha)^2};
        \label{equ:new_mg_intra_sub_vk_case_1}
      \end{equation}

  (2)$v_k$ is merged between $v_i$ and $v_j$:
      \begin{equation}
          \Delta M_{new}(v_{k} \to \emptyset, \gamma) = - \frac{w(v_k, U_k)}{m - \alpha} + \frac{\gamma \cdot d(v_k) \cdot (d(U_k) - \alpha)}{2(m - \alpha)^2};
        \label{equ:new_mg_intra_sub_vk_case_1}
      \end{equation}

  (3) $v_k$ is merged after $v_i$ and $v_j$:
      \begin{equation}
          \Delta M_{new}(v_{k} \to \emptyset, \gamma) = - \frac{w(v_k, U_k)}{m - \alpha} + \frac{\gamma \cdot d(v_k) \cdot (d(U_k) - 2\alpha)}{2(m - \alpha)^2}.
        \label{equ:new_mg_intra_sub_vk_case_1}
      \end{equation}
  
  Therefore, the equivalent of \eqref{equ:new_mg_intra_sub_vk} holds if and only if $v_k$ is merged before $v_i$ and $v_j$. 
  Then, We multiply right side of \eqref{equ:new_mg_intra_sub_vk} and (\ref{equ:new_mg_intra_sub_vl}) by $2(m-\alpha)^2$ respectively and obtain $Y_{(\text{\ref{equ:new_mg_intra_sub_vk}})}$ and $Y_{(\text{\ref{equ:new_mg_intra_sub_vl}})}$:

  \begin{equation}
    \begin{aligned}
      Y_{(\text{\ref{equ:new_mg_intra_sub_vk}})} &= - 2(m - \alpha) \cdot w(v_k, U_k) \\
      &+ \gamma \cdot d(v_k) \cdot d(U_k) \\
      &= X_{\text{\ref{equ:old_mg_intra_sub_vk_numerator}}} + 2\alpha \cdot w(v_k, U_k),
    \end{aligned}
    \label{equ:new_mg_intra_sub_vk_numerator}
  \end{equation}

  \begin{equation}
    \begin{aligned}
      Y_{(\text{\ref{equ:new_mg_intra_sub_vl}})} &= X_{\text{\ref{equ:old_mg_intra_sub_vl_numerator}}} + 2\alpha \cdot w(v_l, U_l),
    \end{aligned}
    \label{equ:new_mg_intra_sub_vl_numerator}
  \end{equation}

  Only if $\alpha > m - \frac{\gamma \cdot d(v_k) \cdot d(U_k) }{2w(v_k, U_k)}$, $v_k$ could be removed from its sub-community; $v_l$ should be removed from its sub-community if and only if $\alpha > m - \frac{\gamma \cdot d(v_l) \cdot d(U_l)}{2w(v_l, U_l)}$.

\end{proof}

\subsection{Proof of Lemma \ref{lem:cross_sub_e_del}}
\begin{proof}
  \label{proof:cross_sub_e_del}
  
  We adopt the same notations as in the proof of Lemma \ref{lem:intra_sub_e_del}, with the exception that $v_k$ now denotes a vertex residing in the same sub-community as either $v_i$ or $v_j$.
  Based on this setup, the modularity gain after the edge deletion is shown as follows.

  \underbar{\textbf{Case 1: Consider the endpoint $v_i$:}}
  \begin{equation}
    \begin{aligned}
      \Delta M_{new}(v_{i} \to \emptyset, \gamma) &= -\frac{w(v_i, U_i)}{m - \alpha} \\
      & + \frac{\gamma \cdot (d(v_i) - \alpha) \cdot d(U_i)}{2(m - \alpha)^2}.
    \end{aligned}
    \label{equ:mg_inter_sub_vi_former}
  \end{equation}

  We multiply right side of \eqref{equ:mg_inter_sub_vi_former} by $2(m-\alpha)^2$ and obtain $Y_{(\text{\ref{equ:mg_inter_sub_vi_former}})}$:
  \begin{equation}
    \begin{aligned}
      Y_{(\text{\ref{equ:mg_inter_sub_vi_former}})} &= -2(m-\alpha) \cdot w(v_i, U_i)\\
      &+ \gamma \cdot (d(v_i) - \alpha) \cdot d(U_i) \\
      &= X_{(\text{\ref{equ:old_mg_intra_sub_vi}})} + \alpha \cdot (2w(v_i, U_i)- \gamma \cdot d(U_i)) \\ 
      &< X_{(\text{\ref{equ:old_mg_intra_sub_vi}})} + \alpha \cdot 2w(v_i, U_i)
    \end{aligned}
    \label{equ:new_mg_inter_sub_vi_numerator}
  \end{equation}

  Only if $\alpha > m - \frac{\gamma \cdot d(v_i) \cdot d(U_i)}{2w(v_i, U_i)}$, $v_i$ could be removed from its sub-community. 
  $v_j$ holds similar behavior.

  \underbar{\textbf{Case 2: Consider the vertex $v_k \in S_i \cup S_j$, $k \neq i, j$:}}

  \begin{equation}
    \begin{aligned}
      \Delta M_{new}(v_k \to \emptyset, \gamma) &\leq -\frac{w(v_k, U_k)}{m - \alpha} + \frac{\gamma \cdot d(v_k) \cdot d(U_k)}{2(m - \alpha)^2}.
    \end{aligned}
    \label{equ:mg_inter_sub_vk_former}
  \end{equation}

  For \eqref{equ:mg_inter_sub_vk_former}, $v_k$ could be merged before $v_i$ or $v_j$, as well as after $v_i$ or $v_j$. 
  Its equivalent holds if and only if $v_k$ is merged before $v_i$ or $v_j$. 
  We multiply right side of \eqref{equ:mg_inter_sub_vk_former} by $2(m-\alpha)^2$ and obtain $Y_{(\text{\ref{equ:mg_inter_sub_vk_former}})}$:

  \begin{equation}
    \begin{aligned}
      Y_{(\text{\ref{equ:mg_inter_sub_vk_former}})} &= -2(m-\alpha) \cdot w(v_k, U_k) + \gamma \cdot d(v_k) \cdot d(U_k) \\
      &= X_{(\text{\ref{equ:old_mg_intra_sub_vk}})} + 2 \alpha\cdot w(v_k, U_k)
    \end{aligned}
    \label{equ:new_mg_inter_sub_vk_numerator}
  \end{equation}
  
  Only if $\alpha > m - \frac{\gamma \cdot d(v_k) \cdot d(U_k)}{2w(v_k, U_k)}$, $v_k$ could be removed from its sub-community.
  

  \underbar{\textbf{Case 3: Consider the vertex $v_l \notin S_i \cup S_j$:}}
  
  

  \begin{equation}
    \begin{aligned}
      \Delta M_{new}(v_l \to \emptyset, \gamma) &= -\frac{w(v_l, U_l)}{m - \alpha} + \frac{\gamma \cdot d(v_l) \cdot d(U_l)}{2(m - \alpha)^2}.
    \end{aligned}
  \end{equation}
  
  Similar to \textbf{Case 2}, if and only if $\alpha > m - \frac{\gamma \cdot d(v_l) \cdot d(U_l)}{2w(v_l, U_l)}$, $v_l$ should be removed from its sub-community.

\end{proof}

\subsection{Proof of Lemma \ref{lem:sub_e_ins}}
\begin{proof}
  \label{proof:sub_e_ins}
{
  First, we analyze the \textbf{insertion of intra-sub-community edges}. 
  We adopt the same notations as in the proof of Lemma \ref{lem:intra_sub_e_del}.
  Based on this setup, the modularity gain after the edge insertion is shown as follows.

  \underbar{\textbf{Case 1: Consider the endpoint $v_i$}}, which is the latter merged endpoint:
  \begin{equation}
    \begin{aligned}
      \Delta M_{new}(v_{i} \to \emptyset, \gamma) &= -\frac{w(v_i, U_i) + 2\alpha}{m + \alpha} \\
      &+ \frac{\gamma \cdot (d(v_i)+ \alpha) \cdot (d(U_i)+ \alpha)}{2( m + \alpha)^2}.
    \end{aligned}
    \label{equ:mg_intra_add_vj_latter}
  \end{equation}

  We multiply right side of \eqref{equ:mg_intra_add_vj_latter} by $2(m+\alpha)^2$ and obtain $Y_{(\text{\ref{equ:mg_intra_add_vj_latter}})}$:
  \begin{equation}
    \begin{aligned}
      Y_{(\text{\ref{equ:mg_intra_add_vj_latter}})} &= -2(m+\alpha) (w(v_i, U_i) + 2\alpha)\\
      &+ \gamma \cdot \left(d(v_i)+ \alpha) \cdot (d(U_i)+ \alpha\right) \\
      &= X_{(\text{\ref{equ:old_mg_intra_sub_vi}})} +  \alpha\cdot\big(\gamma \cdot(d(I_i) + \alpha) - 2 w(v_i, U_i) - 4\alpha - 4m\big)\\
      &< X_{(\text{\ref{equ:old_mg_intra_sub_vi}})} +  \alpha\cdot\big(\gamma \cdot(d(I_i) + \alpha) - 4m\big)
    \end{aligned}
    \label{equ:mg_intra_add_vj_latter_numerator}
  \end{equation}

  

  Obviously, only if $\gamma \cdot(d(I_i) + \alpha) - 4m > 0$, i.e., $\alpha > \frac{4}{\gamma}m - d(I_i)$, $Y_{(\text{\ref{equ:mg_intra_add_vj_latter}})}$ could be positive.
  %

  \underbar{\textbf{Case 2: Consider the endpoint $v_j$}}, which is the former merged endpoint:

  \begin{equation}
    \begin{aligned}
      \Delta M_{new}(v_j \to \emptyset, \gamma) &= -\frac{w(v_j, U_i)}{m + \alpha} \\
      &+ \frac{\gamma \cdot (d(v_j) + \alpha) \cdot d(U_i)}{2( m + \alpha)^2}.
    \end{aligned}
    \label{equ:mg_intra_add_vi_former}
  \end{equation}

  We multiply right side of \eqref{equ:mg_intra_add_vi_former} by $2(m+\alpha)^2$ and obtain $Y_{(\text{\ref{equ:mg_intra_add_vi_former}})}$:
  \begin{equation}
    \begin{aligned}
      Y_{(\text{\ref{equ:mg_intra_add_vi_former}})} &= -2(m+\alpha) \cdot w(v_j, U_j)\\
      &+ \gamma \cdot (d(v_j) + \alpha) \cdot d(U_j) \\
      &= X_{(\text{\ref{equ:old_mg_intra_sub_vj}})} + \alpha \cdot (\gamma \cdot d(U_j) - w(v_j, U_j))\\
      &< X_{(\text{\ref{equ:old_mg_intra_sub_vj}})} + \alpha \cdot \gamma \cdot d(U_j)
    \end{aligned}
    \label{equ:mg_intra_add_vi_former_numerator}
  \end{equation}

    Only if $\alpha > \frac{2w(v_j, U_j)}{\gamma \cdot d(U_j)} \cdot m - d(v_j)$, $v_j$ could be removed from its sub-community.

  \underbar{\textbf{Case 3: Consider other vertex $v_k \in S_i$, $k \neq i, j$:}}
  \begin{equation}
    \begin{aligned}
      \Delta M_{new}(v_k \to \emptyset, \gamma) &\leq - \frac{w(v_k, U_k)}{m + \alpha}\\ 
      &+ \frac{\gamma \cdot d(v_k) \cdot (d(U_k) + 2\alpha)}{2(m + \alpha)^2}. \\
    \end{aligned}
    \label{equ:new_mg_intra_add_vk}
  \end{equation}
  The equivalent of \eqref{equ:new_mg_intra_add_vk} holds if and only if $v_k$ is merged after $v_i$ and $v_j$. 
  We multiply right side of \eqref{equ:new_mg_intra_add_vk} by $2(m+\alpha)^2$ and obtain $Y_{(\text{\ref{equ:new_mg_intra_add_vk}})}$:
  \begin{equation}
    \begin{aligned}
      Y_{(\text{\ref{equ:new_mg_intra_add_vk}})} &= -2(m+\alpha) \cdot w(v_k, U_k)\\
      &+ \gamma \cdot d(v_k) \cdot (d(U_k)+ 2\alpha) \\
      &= X_{(\text{\ref{equ:old_mg_intra_sub_vk}})} +  \alpha\cdot\big(2\gamma \cdot d(v_k) - 2 w(v_k, U_k)\big) \\
      &< X_{(\text{\ref{equ:old_mg_intra_sub_vk}})} +  2\alpha\cdot \gamma \cdot d(v_k)
    \end{aligned}
    \label{equ:mg_intra_add_vk_latter_numerator}
  \end{equation}
  
   Only if $\alpha > \frac{w(v_k, U_k)}{\gamma \cdot d(v_k)} \cdot m - \frac{1}{2}d(U_k)$, $v_k$ could be removed from its sub-community. 

  \underbar{\textbf{Case 4: Consider other vertex $v_l \notin S_i$:}}

  \begin{equation}
    \begin{aligned}
      \Delta M_{new}(v_l \to \emptyset, \gamma) &\leq - \frac{w(v_l, U_l)}{m + \alpha}\\ 
      &+ \frac{\gamma \cdot d(v_l) \cdot d(U_l)}{2(m + \alpha)^2}. \\
    \end{aligned}
    \label{equ:new_mg_intra_add_vl}
  \end{equation}
  
  \eqref{equ:new_mg_intra_add_vl} holds if and only if $v_l$ is merged after $v_i$ and $v_j$.
  We multiply right side of \eqref{equ:new_mg_intra_add_vl} by $2(m+\alpha)^2$ and obtain $Y_{(\text{\ref{equ:new_mg_intra_add_vl}})}$:
  
  \begin{equation}
    \begin{aligned}
      Y_{(\text{\ref{equ:new_mg_intra_add_vl}})} &= -2(m+\alpha) \cdot w(v_l, U_l)\\
      &+ \gamma \cdot d(v_l) \cdot d(U_l) \\
      &= X_{(\text{\ref{equ:old_mg_intra_sub_vl}})} -  2 \alpha \cdot w(v_l, U_l) < 0
    \end{aligned}
    \label{equ:mg_intra_add_vk_latter_numerator}
  \end{equation}
  $v_l$ is not affected by the intra-sub-community insertion.
}

{
  Now, we consider the \textbf{insertion of cross-sub-community edges}.
  We adopt the same notations as in the proof of Lemma \ref{lem:cross_sub_e_del}.
  Based on this setup, the modularity gain after the edge insertion is shown as follows.

  \underbar{\textbf{Case 5: Consider the endpoint $v_i$:}}
  \begin{equation}
    \begin{aligned}
      \Delta M_{new}(v_{i} \to \emptyset, \gamma) &= -\frac{w(v_i, U_i)}{m + \alpha} \\
      &+ \frac{\gamma \cdot (d(v_i)+ \alpha) \cdot d(U_i)}{2( m + \alpha)^2}.
    \end{aligned}
    \label{equ:mg_cross_add_vj_latter}
  \end{equation}

  We multiply right side of \eqref{equ:mg_cross_add_vj_latter} by $2(m+\alpha)^2$ and obtain $Y_{(\text{\ref{equ:mg_cross_add_vj_latter}})}$:
  \begin{equation}
    \begin{aligned}
      Y_{(\text{\ref{equ:mg_cross_add_vj_latter}})} &= -2(m+\alpha) \cdot w(v_i, U_i)\\
      &+ \gamma \cdot (d(v_i)+ \alpha) \cdot d(U_i) \\
      &= X_{(\text{\ref{equ:old_mg_intra_sub_vi}})} +  \alpha\cdot\big(\gamma \cdot d(U_i) - 2 w(v_i, U_i)\big) \\
      &< X_{(\text{\ref{equ:old_mg_intra_sub_vi}})} +  \alpha\cdot \gamma \cdot d(U_i)
    \end{aligned}
    \label{equ:mg_cross_add_vj_latter_numerator}
  \end{equation}
  Only if $\alpha > \frac{2w(v_i, U_i)}{\gamma \cdot d(U_i)} \cdot m - d(v_i)$, $v_i$ could be removed from its sub-community. 
  $v_j$ is the same.

  \underbar{\textbf{Case 6: Consider other vertex $v_k \in S_i\cup S_j$, $k \neq i, j$:}}
  \begin{equation}
    \begin{aligned}
      \Delta M_{new}(v_k \to \emptyset, \gamma) &\leq - \frac{w(v_k, U_k)}{m + \alpha}\\ 
      &+ \frac{\gamma \cdot d(v_k) \cdot (d(U_k) + \alpha)}{2(m + \alpha)^2}. \\
    \end{aligned}
    \label{equ:new_mg_cross_add_vk}
  \end{equation}

  The equivalent of \eqref{equ:new_mg_cross_add_vk} holds if and only if $v_k$ is merged after $v_i$ or $v_j$. 
  We multiply right side of \eqref{equ:new_mg_cross_add_vk} by $2(m+\alpha)^2$ and obtain $Y_{(\text{\ref{equ:new_mg_cross_add_vk}})}$:
  \begin{equation}
    \begin{aligned}
      Y_{(\text{\ref{equ:new_mg_cross_add_vk}})} &= -2(m+\alpha) \cdot w(v_k, U_k)\\
      &+ \gamma \cdot d(v_k) \cdot (d(U_k) + \alpha) \\
      &= X_{(\text{\ref{equ:old_mg_intra_sub_vk}})} +  \alpha\cdot\big(\gamma \cdot d(v_k) - 2 w(v_k, U_k)\big) \\
      &< X_{(\text{\ref{equ:old_mg_intra_sub_vk}})} +  \alpha\cdot \gamma \cdot d(v_k) \\
      &< X_{(\text{\ref{equ:old_mg_intra_sub_vk}})} +  2\alpha\cdot \gamma \cdot d(v_k)
    \end{aligned}
    \label{equ:new_mg_cross_add_vk_numerator}
  \end{equation}
  
  $v_k$ could be removed from its sub-community only if $\alpha > \frac{w(v_k, U_k)}{\gamma d(v_k)} \cdot m - \frac{1}{2}d(U_k)$. 

  \underbar{\textbf{Case 7: Consider other vertex $v_l \notin S_i \cup S_j$:}}

  \begin{equation}
    \begin{aligned}
      \Delta M_{new}(v_l \to \emptyset, \gamma) &\leq - \frac{w(v_l, U_l)}{m + \alpha}\\ 
      &+ \frac{\gamma \cdot d(v_l) \cdot d(U_l)}{2(m + \alpha)^2}. \\
    \end{aligned}
    \label{equ:new_mg_cross_add_vl}
  \end{equation}
  
  \eqref{equ:new_mg_cross_add_vl} holds if and only if $v_l$ is merged after $v_i$ and $v_j$.
  We multiply right side of \eqref{equ:new_mg_cross_add_vl} by $2(m+\alpha)^2$ and obtain $Y_{(\text{\ref{equ:new_mg_cross_add_vl}})}$:
  
  \begin{equation}
    \begin{aligned}
      Y_{(\text{\ref{equ:new_mg_cross_add_vl}})} &= -2(m+\alpha) \cdot w(v_l, U_l)\\
      &+ \gamma \cdot d(v_l) \cdot d(U_l) \\
      &= X_{(\text{\ref{equ:old_mg_intra_sub_vl}})} -  2 \alpha \cdot w(v_l, U_l) < 0
    \end{aligned}
    \label{equ:mg_cross_add_vl_latter_numerator}
  \end{equation}
    $v_l$ is not affected by the cross-sub-community insertion.

  Conclusively, the effects of these edge insertions are:
    \begin{enumerate}
        \item $v_i$ could be removed from its sub-community only if $\alpha > \frac{4}{\gamma}m - d(I_i)$ or $\alpha > \frac{2w(v_i, U_i)}{\gamma \cdot d(U_i)} \cdot m - d(v_i)$ according to \textbf{Case 1 and 5}. 
        
        \item  $v_j$ could be removed from its sub-community, only if $\alpha > \frac{2w(v_j, U_j)}{\gamma \cdot d(U_j)} \cdot m - d(v_j)$ according to \textbf{Case 2 and 5}. 
        
        \item $v_k \in S_i \cup S_j$ ($k \neq i, j$) could be removed from its sub-community only if $\alpha > \frac{w(v_k, U_k)}{\gamma \cdot d(v_k)} \cdot m - \frac{1}{2}d(U_k)$ according to \textbf{Case 3 and 6}. 

        \item $v_l \notin S_i \cup S_j$ is unaffected according to \textbf{Case 4 and 7}.
    \end{enumerate}
}

\end{proof}

\section{Additional experiments}
\label{sec:append:exp}
\input{sections/Appendices/exp/gamma_fig}
We now present our additional experimental results. 
First, we evaluate the effect of $\gamma$ on modularity and runtime. 
Next, we show how the number of vertices and edges changes with the number of iterations. 
We then present the numerical results for \texttt{AFF}. 
Finally, we demonstrate the long-term validity of \texttt{HIT-Leiden}.

$\bullet$ {\bf Effect of $\gamma$ on modularity and runtime.}
Fig. \ref{fig:exp_mod_gm} shows the average modularity values for all maintenance algorithms, with the parameter $\gamma\in \{0.5, 2, 8, 32\}$ across all 9 batches, and with the batch size fixed at 1000.
Across all datasets, our {\tt HIT-Leiden} still achieves comparable modularity with other methods across different $\gamma$. 
Fig. \ref{fig:exp_time_gm} shows the average runtime across all maintenance algorithms under the identical configuration.
Across all datasets, the runtime of \texttt{HIT-Leiden} increases as $\gamma$ increases.
This occurs because higher $\gamma$ values tend to form smaller, finer-grained communities, leading to an increase in the number of super-vertices that change their community or sub-community affiliation.

$\bullet$ {\bf Overall results for the number of vertices and edges.} 
Fig. \ref{fig:vesize} displays the overall results for the average number of vertices and edges in \texttt{HIT-Leiden} with respect to the number of iterations, with $b$ and $P$ set to their default values $1,000$ and $10$, respectively.
Clearly, the first-level graph dominates the total size, and thus \texttt{HIT-Leiden} consumes asymptotically similar space to \texttt{Leiden}, whose space complexity is $O\left(|V| + |E|\right)$.

$\bullet$ {\bf Overall results for \texttt{AFF}.} 
Fig. \ref{fig:exp_overal_aff} displays the overall results for the average number of supervertices that change their communities or sub-communities (i.e., $|\texttt{AFF}|$), with $b$ set to its default value $1,000$.
Clearly, {\tt HIT-Leiden} processes the smallest affected region across datasets, particularly on {\tt it-2004} dataset.
This is consistent with the overall efficiency results in Fig. \ref{fig:exp_overal_time}.

$\bullet$ {\bf Long-term effectiveness.}
To demonstrate the long-term effectiveness of maintaining communities, we enlarge the number $r$ of batches from 9 to 999 and set $b=10,000$. 
Fig. \ref{fig:long_term} presents the modularity, proportion of subpartition $\gamma$-dense communities, and runtime on the sx-stackoverflow dataset.
We observe that incremental Leiden algorithms exhibit higher stability than {\tt ST-Leiden} in modularity since they use previous community memberships, and {\tt HIT-Leiden} is faster than other algorithms.
Besides, \texttt{HIT-Leiden} presents its adaptability to substantial edge changes.
Note that when updating the 276\textsuperscript{th} batch, \texttt{HIT-Leiden} incurs high runtime due to substantial changes in community memberships caused by edge evolution, which is also reflected in the corresponding increase in modularity.

\section{Two application studies}
\label{subs:exp:app}
\pgfplotstableread[row sep=\\,col sep=&]{
  size	&	st	&	hit	\\
1	&	288341	&	552459	\\
2	&	290897	&	55583.1	\\
3	&	290727	&	37452.4	\\
4	&	291312	&	37398.2	\\
5	&	291389	&	37321.2	\\
6	&	291088	&	37577.2	\\
7	&	290846	&	37791.2	\\
8	&	292406	&	38150	\\
9	&	292215	&	37999.4	\\
10	&	293153	&	39389.9	\\
11	&	293006	&	38171.3	\\
12	&	293488	&	38202.8	\\
13	&	293073	&	38964.4	\\
14	&	293250	&	37988.6	\\
15	&	293670	&	39521.3	\\
16	&	294011	&	39460.2	\\
17	&	295803	&	38130.3	\\
18	&	294694	&	37579.4	\\
19	&	294120	&	37809.9	\\
20	&	295071	&	38029.6	\\
21	&	295312	&	37925	\\
22	&	294672	&	38659.1	\\
23	&	291204	&	38499.8	\\
24	&	290902	&	37289.1	\\
25	&	290931	&	37516.8	\\
26	&	291163	&	36725.5	\\
27	&	291166	&	36821.7	\\
}\appt
\pgfplotstableread[row sep=\\,col sep=&]{
    size	&	st	&	hit	\\
    1	&	0.379222973	&	0.379222973	\\
    2	&	0.37458194	&	0.379598662	\\
    3	&	0.385690789	&	0.383223684	\\
    4	&	0.376326531	&	0.393469388	\\
    5	&	0.392741935	&	0.397580645	\\
    6	&	0.401442308	&	0.393429487	\\
    7	&	0.391859537	&	0.391859537	\\
    8	&	0.390166534	&	0.391752577	\\
    9	&	0.375295043	&	0.393391031	\\
    10	&	0.3875	    &	0.3921875	\\
    11	&	0.383242824	&	0.394103957	\\
    12	&	0.390600924	&	0.397534669	\\
    13	&	0.381679389	&	0.402290076	\\
    14	&	0.384789157	&	0.396837349	\\
    15	&	0.401355422	&	0.396837349	\\
    16	&	0.396706587	&	0.398203593	\\
    17	&	0.398523985	&	0.398523985	\\
    18	&	0.394005848	&	0.398391813	\\
    19	&	0.396376812	&	0.396376812	\\
    20	&	0.385334292	&	0.381020848	\\
    21	&	0.390035587	&	0.39430605	\\
    22	&	0.39321075	&	0.389674682	\\
    23	&	0.384561404	&	0.390175439	\\
    24	&	0.396491228	&	0.38877193	\\
    25	&	0.397089397	&	0.374913375	\\
    26	&	0.393521709	&	0.382494831	\\
    27	&	0.385724091	&	0.376115305	\\
}\appst

\pgfplotstableread[row sep=\\,col sep=&]{
    size	&	strag	&	hitrag	\\
    1	&	16589000   	&	16589000	\\
    2	&	15714000	&	269924.9167\\
    3	&	16817000	&	261794.6481\\
    4	&	15277000	&	321958.6356\\
    5	&	16817000	&	280494.2658\\
    6	&	18806000	&	305698.0984\\
    7	&	15714000	&	269924.9167\\
    8	&	16817000	&	369927.2202\\
    9	&	15277000	&	273177.0241\\
    10	&	18806000	&	305698.0984\\
}\ragtime

\pgfplotstableread[row sep=\\,col sep=&]{
    size	&	strag	&	hitrag	\\
    1	&	26119958	&	26233623	\\
    2	&	26248949	&	80344	\\
    3	&	25153765	&	266972	\\
    4	&	26198243	&	411849	\\
    5	&	26380166	&	208077	\\
    6	&	26119958	&	304487	\\
    7	&	26366384	&	97295	\\
    8	&	26056902	&	129716	\\
    9	&	25433178	&	295121	\\
    10	&	26038671	&	181083	\\
}\ragtc

\pgfplotstableread[row sep=\\,col sep=&]{
    size	&	strag	&	hitrag	\\
    1	&	0.566   	&	0.566	\\
    2	&	0.57    	&	0.566	\\
    3	&	0.551   	&	0.557	\\
    4	&	0.566   	&	0.569	\\
    5	&	0.55     	&	0.562	\\
    6	&	0.566   	&	0.568   \\
    7	&	0.551    	&	0.56	\\
    8	&	0.562   	&	0.567   \\
    9	&	0.574   	&	0.565   \\
    10	&	0.55       	&	0.575	\\
}\ragacc

\begin{figure*}[t]
    \centering
    \small

    \begin{minipage}[b]{0.57\linewidth}
        \centering
        \begingroup
        \setlength{\textwidth}{\linewidth}
        \pgfplotslegendfromname{rag}\\[-2pt]
        \subfloat[Runtime]{
        \begin{minipage}[b]{0.327290969899666\textwidth}
            \centering
            \begin{tikzpicture}[scale=0.38]
                \begin{axis}[
                    xtick={1,2,3,4,5,6,7,8,9,10},
                    xticklabels={0,1,2,3,4,5,6,7,8,9},
                    xmin=0.6,xmax=10.5,
                    ymin=8*10^4,ymax=2*10^8,
                    ymode = log,
                    width=2.5\textwidth,
                    height=1.8\textwidth,
                    mark size=6pt,
                    line width=2.5pt,
                    ylabel={\LARGE \bf Runtime (ms)},
                    ylabel style={yshift=5pt},
                    xlabel={\LARGE \bf batch},
                    ticklabel style={font=\LARGE},
                    every axis plot/.append style={ultra thick},
                    every axis/.append style={ultra thick},
                ]
                    \addplot [mark=o, line width=2.0pt, color=c2] table[x=size,y=strag]{\ragtime};
                    \addplot [mark=triangle,line width=2.0pt,color=c7,mark options={rotate=180}] table[x=size,y=hitrag]{\ragtime};
                \end{axis}
            \end{tikzpicture}
        \end{minipage}
        \label{fig:rag:time}
        }
        \subfloat[Token cost]{
        \begin{minipage}[b]{0.327290969899666\textwidth}
            \centering
            \begin{tikzpicture}[scale=0.38]
                \begin{axis}[
                    xtick={1,2,3,4,5,6,7,8,9,10},
                    xticklabels={0,1,2,3,4,5,6,7,8,9},
                    xmin=0.6,xmax=10.5,
                    ymin=5*10^3,ymax=2*10^9,
                    ymode = log,
                    mark size=6pt,
                    line width=2.5pt,
                    width=2.5\textwidth,
                    height=1.8\textwidth,
                    ylabel={\LARGE \bf \# of tokens},
                    ylabel style={yshift=5pt},
                    xlabel={\LARGE \bf batch},
                    ticklabel style={font=\LARGE},
                    every axis plot/.append style={ultra thick},
                    every axis/.append style={ultra thick},
                ]
                    \addplot [mark=o, line width=2.0pt, color=c2] table[x=size,y=strag]{\ragtc};
                    \addplot [mark=triangle,line width=2.0pt,color=c7,mark options={rotate=180}] table[x=size,y=hitrag]{\ragtc};
                \end{axis}
            \end{tikzpicture}
        \end{minipage}
        \label{fig:rag:token}
        }
        \subfloat[Accuracy]{
        \begin{minipage}[b]{0.327290969899666\textwidth}
            \centering
            \begin{tikzpicture}[scale=0.38]
                \begin{axis}[
                    legend style = {
                        legend columns=-1,
                        inner sep=0pt,
                        draw=none
                    },
                    legend image post style={scale=1.2, line width=0.8pt},
                    legend to name=rag,
                    xtick={1,2,3,4,5,6,7,8,9,10},
                    xticklabels={0,1,2,3,4,5,6,7,8,9},
                    xmin=0.6,xmax=10.5,
                    ymin=0.5,ymax=0.6,
                    mark size=6pt,
                    line width=2.5pt,
                    width=2.5\textwidth,
                    height=1.8\textwidth,
                    ylabel={\LARGE \bf Accuracy},
                    ylabel style={yshift=5pt},
                    xlabel={\LARGE \bf batch},
                    ticklabel style={font=\LARGE},
                    every axis plot/.append style={ultra thick},
                    every axis/.append style={ultra thick},
                ]
                    \addlegendimage{mark=o, line width=2.0pt, color=c2}
                    \addlegendentry{{\small \tt ST-Leiden-RAG}}

                    \addlegendimage{mark=triangle,line width=2.0pt,color=c7,mark options={rotate=180}}
                    \addlegendentry{{\small \tt HIT-Leiden-RAG}}

                    \addplot [mark=o, line width=2.0pt, color=c2] table[x=size,y=strag]{\ragacc};
                    \addplot [mark=triangle,line width=2.0pt,color=c7,mark options={rotate=180}] table[x=size,y=hitrag]{\ragacc};
                \end{axis}
            \end{tikzpicture}
        \end{minipage}
        \label{fig:rag:acc}
        }
        \caption{Comparison between \texttt{ST-Leiden-RAG} and \texttt{HIT-Leiden-RAG} over 9 update batches on Graph-RAG.}
        \label{fig:rag}

        \endgroup
    \end{minipage}%
    \hfill
    \begin{minipage}[b]{0.38\linewidth}
        \centering
        \begingroup
        \setlength{\textwidth}{\linewidth}
        \small
        \pgfplotslegendfromname{risk}\\[-1pt]
        \subfloat[Runtime]{
        \begin{minipage}[b]{0.494879622344611\textwidth}
            \centering
            \begin{tikzpicture}[scale=0.38]
                \begin{axis}[
                    xtick={1, 10, 19, 27},
                    xticklabels={$0$, $9$,$18$,$26$},
                    xmin=0,xmax=30,
                    ymin=5*10^3,ymax=2*10^6,
                    ymode = log,
                    mark size=4pt,
                    line width=2.5pt,
                    width=2.5\textwidth,
                    height=1.8\textwidth,
                    xlabel={\LARGE \bf batch},
                    ylabel={\LARGE \bf Runtime (ms)},
                    ylabel style={yshift=5pt},
                    ticklabel style={font=\LARGE},
                    every axis plot/.append style={ultra thick},
                    every axis/.append style={ultra thick},
                ]
                    \addplot [smooth, line width=2.0pt, color=c2] table[x=size,y=st]{\appt};
                    \addplot [smooth, line width=2.0pt, color=c7] table[x=size,y=hit]{\appt};
                \end{axis}
            \end{tikzpicture}
        \end{minipage}
        \label{fig:risk:runtime}
        }
        \subfloat[Recall]{
        \begin{minipage}[b]{0.494879622344611\textwidth}
            \centering
            \begin{tikzpicture}[scale=0.38]
                \begin{axis}[
                    legend style = {
                        legend columns=-1,
                        inner sep=0pt,
                        draw=none
                    },
                    legend image post style={scale=1.2, line width=0.8pt},
                    legend to name=risk,
                    xtick={1, 10, 19, 27},
                    xticklabels={$0$, $9$,$18$,$26$},
                    xmin=0,xmax=30,
                    ytick={0.36, 0.39, 0.42},
                    yticklabels = {$0.36$,$0.39$, $0.42$},
                    ymin=0.35,ymax=0.43,
                    width=2.5\textwidth,
                    height=1.8\textwidth,
                    mark size=4pt,
                    line width=2.5pt,
                    xlabel={\LARGE \bf batch},
                    ylabel={\LARGE \bf Recall},
                    ylabel style={yshift=5pt},
                    ticklabel style={font=\LARGE},
                    every axis plot/.append style={ultra thick},
                    every axis/.append style={ultra thick},
                ]
                    \addplot [smooth, line width=2.0pt, color=c2] table[x=size,y=st]{\appst};
                    \addplot [smooth, line width=2.0pt, color=c7] table[x=size,y=hit]{\appst};

                    \addlegendimage{smooth, line width=2.0pt, color=c2}
                    \addlegendentry{{\small \tt ST-Leiden}}

                    \addlegendimage{smooth,line width=2.0pt,color=c7,mark options={rotate=180}}
                    \addlegendentry{{\small \tt HIT-Leiden}}
                \end{axis}
            \end{tikzpicture}
        \end{minipage}
        \label{fig:risk:recall}
        }
        \caption{Comparison between \texttt{ST-Leiden} and \texttt{HIT-Leiden} over 26 update batches on fraud-ring discovery.}
        \label{fig:risk}
        \endgroup
    \end{minipage}
\end{figure*}
In this section, we apply \texttt{HIT-Leiden} to two real applications and analyze its empirical results.

$\bullet$ {\bf Graph-RAG \cite{graphrag2025}.}
%
To improve the LLM generation for answering a question, people often first retrieve relevant information from an external corpus and then augment the question with this information before invoking the LLM.
To facilitate retrieval, Graph-RAG builds an offline index: It first builds a graph for the corpus, then clusters the graph hierarchically using Leiden, and finally associates a summary for each community, which is generated by an LLM with some token cost.
In practice, since the underlying corpus often changes, the communities and their summaries need to be updated as well.
Our {\tt HIT-Leiden} can not only dynamically update the communities efficiently, but also save the token cost since we only need to regenerate the summaries for the updated communities.

To experiment, we use the HotpotQA \cite{yang2018hotpotqa} dataset, which contains Wikipedia-based question-answer (QA) pairs.
We randomly select 9,500 articles to build the initial graph, and insert 9 batches of new articles, each with 5 articles.
The LLM we use is Doubao-1.5-pro-32k.
To support a dynamic corpus, we adapt the static Graph-RAG method by updating communities using {\tt ST-Leiden} and {\tt HIT-Leiden}, respectively.
These two RAG methods are denoted by {\tt ST-Leiden-RAG} and {\tt HIT-Leiden-RAG}, respectively.
Note that {\tt ND-Leiden}, {\tt DS-Leiden}, and {\tt DF-Leiden} are not fit to maintain the hierarchical communities of Graph-RAG since they lack hierarchical maintenance.
We report their runtime, token cost, and accuracy in Fig. \ref{fig:rag}.
Clearly, {\tt HIT-Leiden-RAG} is $56.1 \times$ faster than {\tt ST-Leiden-RAG}.
Moreover, it significantly reduces the summary token cost while preserving downstream QA accuracy, since its token cost is only 0.8\% of the token cost of {\tt ST-Leiden-RAG}.
Hence, {\tt HIT-Leiden} is effective for supporting Graph-RAG on a dynamic corpus.

$\bullet$ \textbf{Fraud-ring discovery \cite{liu2022eriskcom}.}
We also experiment with a workflow of fraud-ring discovery at \Acom, which is different from the risk dataset used in the general maintenance benchmark in Table \ref{tab:datasets}.
A fraud ring is a group with strong internal coordination that commits fraud. 
Many fraud cases are group-based, so fraud-ring discovery can be viewed as finding densely connected communities in a graph.
Leiden partitions the graph into candidate communities, reducing search space and scaling efficiently. 
{
Based on Leiden communities, downstream modules produce a list of high-risk individuals. 
The list is then manually reviewed, and the recall of confirmed ring members is reported.
}
As the graph evolves, we need to update Leiden communities to timely discover new potential ring members.

The input graph contains 58.3M vertices and 102.9M edges spanning 206 days, with daily updates (about 0.5M edge updates per batch).
We first construct a static graph using the updates within the first 180 days.
Then, we maintain the graph with a sliding window of 180 days; that is, we update it day by day by adding edges from the remaining 26 days while removing edges that fall outside the window.
%
%
Finally, we evaluate the runtime of {\tt ST-Leiden} and {\tt HIT-Leiden}, since other processes are black-box to us, and the recall, as shown in Fig. \ref{fig:risk}.
Clearly, {\tt HIT-Leiden} achieves $7.7\times$ the speed of {\tt ST-Leiden} with comparable recall.

\end{document}